\pgfplotsset{compat=1.18}
\theoremstyle{plain}
\newtheorem{lem}{\protect\lemmaname}
\newtheorem{prop}{\protect\propositionname}
\newtheorem*{theorem}{Theorem}
\theoremstyle{definition}
\newtheorem*{defn*}{\protect\definitionname}
\newtheorem{defn}{\protect\definitionname}
\newtheorem{cor}{\protect\corollaryname}
\newtheorem{hypothesis}{Hypothesis}
\providecommand{\lemmaname}{Lemma}
\providecommand{\corollaryname}{Corollary}
\providecommand{\definitionname}{Definition}
\providecommand{\propositionname}{Proposition}
\title{Does joint liability reduce cheating in contests with agency problems?
Theory and experimental evidence\thanks{Funding from the International Olympic Committee's Anti-Doping Research Fund is gratefully acknowledged.}}
\author{Qin Wu\thanks{School of Economics, Finance and Marketing, RMIT University, Australia (Email:
qin.wu@rmit.edu.au).}, Ralph-C Bayer\thanks{School of Economics and Public Policy, University of Adelaide, Australia (Email:
ralph.bayer@adelaide.edu.au).}}
\begin{document}

\maketitle

\begin{abstract}
Contest participants often have strong incentives to engage in cheating. Sanctions serve as a common deterrent against such conduct. Often, other agents on the contestant's team (e.g., a coach of an athlete) or a company (a manager of an R\&D engineer) have a vested interest in outcomes and can influence the cheating decision. An agency problem arises when only the contestant faces the penalties for cheating.
Our theoretical framework examines joint liability, i.e., shifting some responsibility from the contestant to the other agent, as a solution to this agency problem. Equilibrium analysis shows that extending liability reduces cheating if fines are harsh. Less intuitively, when fines are lenient, a shift in liability can lead to an increase in equilibrium cheating rates. Experimental tests confirm that joint liability is effective in reducing cheating if fines are high. However, the predicted detrimental effect of joint liability for low fines does not occur.

Keywords: Joint liability, cheating, contests 

JEL codes: D74, K14,
C92 
\end{abstract}
\newpage
\section{Introduction}

In high-stake contests, participants face strong incentives to cheat. To maintain fairness, contest organizers often implement rules and laws that impose penalties for cheaters. When an individual contestant has full control over the decision to cheat, the classical crime and punishment frameworks provide a basis for understanding their trade-offs between the expected benefits and costs of cheating. However, the situation becomes far more complex when other agents, such as managers, advisors, or support personnel, can influence the decision of a contestant to cheat or not.

The issue of non-contesting team members influencing cheating decisions arises across various domains. In sports, doping decisions often involve not only athletes but also their coaches, medical personnel, and team officials. For example, state-organized doping programs in Russia highlight the critical role external agents can play in enabling misconduct. Similarly, in the 2012 doping scandal involving the Essendon Football Club in the Australian Football League, coaching staff and sports scientists implemented a supplements program that included substances later deemed illegal. Sometimes the athletes might not even know that they are given performance enhancing drugs by their coaches as inthe case of British sprinter Bernice Wilson. Outside professional sports, high school and college athletics have seen cases in which trainers pressured young athletes to use performance-enhancing drugs to secure scholarships or endorsements.

Beyond sports, corporate scandals reveal similar patterns. In the Volkswagen emissions scandal, engineers tasked with developing diesel engines designed illegal defeat devices to cheat on emissions tests as part of their efforts to compete in the race for technological innovation. However, these actions were heavily influenced by the senior executives who directed and approved the scheme. Similarly, in the Enron scandal, financial officers and traders manipulated accounting practices to inflate earnings, while high-ranking executives orchestrated the fraudulent strategies, including concealing billions of dollars in debt.

In many of these cases, the contestants bear the primary blame, while the agents who influence or facilitate their cheating often avoid direct consequences. The involvement of influencing agents, who benefit from a contestant’s performance without directly participating in the contest, creates an agency problem. This issue is exacerbated when liability for detected cheating falls solely on the contestant. An intuitive remedy for this agency problem involves the implementation of joint liability, which reallocates a portion of the liability for sanctions to non-participating agents involved in the decision-making process.

This paper investigates whether joint liability necessarily yields a reduction in cheating. In a basic contest model with agency frictions, we show that transferring some liability from the contestant to the agent (which we call "the manager") can discourage cheating. Somewhat surprisingly, this is only true if fines are harsh. In contrast, when fines are low, joint liability can even result in an increased equilibrium cheating rate. Introducing joint liability typically discourages managers from cheating, while simultaneously increasing the contestant's incentives to cheat. The magnitude of fines, together with complex general equilibrium effects, determines which effect dominates and whether joint liability reduces or increases the overall cheating rate in equilibrium. 

The key lesson from our theoretical investigations is that a principal who aims at reducing cheating should consider transferring some liability to non-contestants only if fines are sufficiently severe. However, this insight relies on the assumption that agents and contestants adhere to equilibrium behavior. Given the complexity of strategic interaction and  subtle general equilibrium effects, it is unclear whether actual behavior is aligned with these theoretical predictions. In order to assess this, we take our model to the laboratory and investigate if the mechanism outlined by theory guides actual behavior and leads to the hypothesized comparative statics of cheating frequencies. 

In a two-by-two design, we manipulate the sanctioning rule (individual liability versus joint liability) and the severity of sanctions (high fines versus low fines). The experimental results only align in part with the theoretical predictions. Consistent with theory, we find that joint liability significantly reduces cheating when fines are high. However, contrary to the predicted backfiring effect, where joint liability is expected to increase cheating under low fines, we observe no such increase. To further investigate the robustness of this finding, we introduce two additional treatments with even lower fine levels, where the backfiring effect should be even stronger. Behavior in the additional treatments suggests that the absence of the backfiring effect is robust.


The absence of the backfiring effect is good news for policy makers. When considering to shift some liability to non-contestants, they do not have to perform the daunting task of figuring out if fines are above the critical level for joint liability to decrease cheating rates. In the worst case, according to our experimental results, introducing joint liability will leave cheating rates unchanged, with the upside of potentially reducing them.

\section{Literature Review}

A substantial body of theoretical research explores cheating behavior in contests, particularly focusing on the mechanisms that can deter such behavior. In individual contest settings, \citet{curry2009deterrence} use a general contest success function to study the impact of prize structures and differential monitoring on deterrence costs. They find that differential monitoring between winners and losers, re-awarding prizes forfeit by caught cheaters, and awarding second prizes lowers the monitoring costs required to deter cheating. \citet{gilpatric2011cheating} studies cheating behavior in a symmetric rank-order tournament, where two players make effort and cheating decisions. The study compares correlated audits (where either all contestants are audited or none) with uncorrelated audits (random individual audits), concluding that correlated audits are more effective in deterring cheating. \citet{ryvkin2013contests} examines doping within a contest environment with multiple contestants, incorporating random and independent testing after players make cheating decisions. The study shows that the minimal enforcement parameters required to prevent doping in equilibrium can be non-monotonic with respect to the number of players. Moreover, it highlights the crucial role of penalties in large tournaments, as they contribute to reducing the minimal testing probability as the number of players increases.

While these studies provide insights into individual cheating behavior in competitive settings, matters become more complex when multiple individuals in a team can cheat. This is even true without the complication of competing with other teams. Research has shown that team settings significantly influence cheating behavior. For example, \citet{sutter2009deception} compares individuals and teams in an experiment where participants could deceive others by telling the truth in a misleading way. The study finds that teams are more likely to engage in such deceptive strategies than individuals, suggesting that team settings lower the moral barriers to unethical behavior. One contributing factor is the diffusion of responsibility in teams. \citet{conrads2013lying} find that individuals in team settings feel less personally accountable for dishonest actions, which encourages more cheating. Exposure to unethical behavior through communication is another factor that leads to increased cheating. \citet{kocher2018lie} demonstrate that individuals are more likely to engage in unethical practices when they observe or discuss such behaviors within their group. Moreover, when rewards for cheating are shared among team members, individuals tend to cheat more by over-reporting their performance, as individual moral concerns are diluted when the benefits are collective \citep{wiltermuth2011cheating}.

Despite a general aversion to cheating \citep{gneezy2005deception,lundquist2009aversion,abeler2019preferences,hurkens2009would}, individuals often justify their actions by discounting the immorality of cheating when their dishonest behavior benefits others \citep{schweitzer2002stretching}. This moral disengagement is facilitated by team dynamics, where in-group favoritism leads individuals to cheat more for in-group members at the expense of out-group members \citep{cadsby2016group}. \citet{danilov2013dark} provide experimental evidence supporting this notion, showing that strong group affiliations can lead to more unethical behaviors, such as financial advisers deliberately recommending low-quality products to customers.


The influence of agents (managers) on cheating decisions introduces an additional layer of complexity. This is particulalry important in corporate settings where individual actions can significantly impact organizational outcomes. \citet{friebel2004abuse} investigate how abuse of authority within organizations can lead to undesirable outcomes such as cheating. \citet{pierce2008ethical} examine ethical spillovers in firms, focusing on vehicle emissions testing facilities. Using data from over three million emissions tests, they find that inspectors adjust their behavior to align with the ethical norms of the organizations they work for, indicating that firms can influence employee behavior through formal norms and incentives. 

The issue of agency problems affecting cheating decisions in competitive environments has received limited attention from economists. The most closely related paper is \citet{crocker2005corporate}, who develop a principal-agent model to study cheating behavior in the context of corporate tax evasion. In their model, shareholders (the principals) employ a Chief Financial Officer (CFO, the agent) who possesses private information about allowable deductions and makes decisions about tax declarations on behalf of the firm. The CFO may engage in illegal tax evasion by claiming unauthorized deductions, which, if detected, incur penalties. Shareholders incentivize the CFO to reduce the company's tax burden by designing a compensation contract that ties the CFO's salary inversely to the effective tax rate. The authors examine the efficacy of penalties levied on either the shareholders or the CFO alone and compare these to a joint liability regime. They find that penalties imposed directly on the CFO are more effective in reducing evasion than those imposed on shareholders. 

While their study provides insights into the internal agency dynamics that drive cheating decisions, it is limited to a single-firm context where the shareholders influence the CFO’s decisions only indirectly through the compensation contract. The model does not account for the external competitive pressures, which are often critical in real-world scenarios.

Our study contributes to the literature by examining cheating behavior in a competitive environment. Our work differs from previous studies in two key aspects. First, both the manager and the contestant can directly influence the final cheating decision, which introduces a principal-agent component to the model. Second, our model introduces competition with other teams, introducing strategic considerations that impact the decision to cheat and the effectiveness of joint liability as a deterrent. Moreover, unlike \citet{crocker2005corporate}, who focus on the internal dynamics of a single firm, our competitive framework captures how the actions of external competitors interact with internal agency dynamics and influence cheating behavior. 



The paper proceeds as follows. In the next Section, we develop our theoretical model and derive equilibrium predictions. We focus on how equilibrium cheating rates depend on the size of the fine and the distribution of liability. Section \ref{sec:Experimental-Design4} details our experimental design. Section \ref{sec:Results4} presents the main results of the experiments. Section \ref{robustness} reports on two additional treatments that provide a robustness check on the result that joint liability does not backfire even if fines are low. The final Section offers a brief conclusion.

\section{\label{sec:Theoretical-Model4}Theoretical model}

In our model there are two teams, $1$ and $2$. Each team consists of two players, a contestant $c$ and a manager $m$. Hence, the set of players is $I=\{c1,m1,c2,m2\}$. The contestants exert effort in a share-prize contest against each other. The managers do not exert effort, but still receive a share of the prize their contestants secure in the contest. Teams can decide to cheat or not. The team's cheating decision is jointly determined by the contestant and the manager. For given efforts and the cheating decision of the other team, cheating increases the share of the total prize a team receives.  

In stage 1, all four players choose whether they would like their team to cheat or not. Denote player $ri$'s cheating intention as $d_{ri}\in\left\{ 0,1\right\}$, where $r$ denotes the role and $i$ the team of the player. A value of one indicates that an agent would prefer her team to cheat. 
Once cheating intentions are chosen, nature randomly (with equal probability) implements the intention of one team member as the final cheating decision $d_{i}\in\{0,1\}$ for the team. This random dictator mechanism is used
as a reduced-form description of an otherwise more complex negotiation
process. We chose this simplified setup with the experimental implementation in mind. Simplicity is important to keep confusion and the resulting noise in behavior to a minimum. Our simple reduced-form mechanism captures the salient features of bargaining by ensuring that the likelihood of a team cheating increases with each team member's intention to cheat. Moreover, if both team members agree on an action, then the agreed action is implemented with certainty.

In stage two, after observing the implemented cheating decisions $\{d_1,d_2\}$ of both teams, the two contestants $c1$ and $c2$ simultaneously exert efforts $e_{1}$ and $e_2$. The effort and the cheating decisions determine the outcome of the contest. The prize received by team $i$ is denoted by $S_{i}(e_{i},e_{j},d_{i},d_{j})$. The contestant receives fraction
$r \in (0,1)$ of the prize. The remainder goes to the manager.

While the assumption that contestants observe cheating decisions may seem restrictive, it is not unrealistic in many real-world settings. In sports, for example, athletes often show noticeable physical changes associated with cheating, and their opponents can recognize unusually rapid performance improvements without direct evidence of doping. Similarly, in corporate environments, unexpectedly high reported profits or efficiency can raise suspicions of unethical practices, alerting competitors to potential misconduct.

These observable signs of cheating justify our use of complete information as a simplifying assumption. It allows us to derive closed-form solutions for optimal effort decisions, which simplifies the analysis considerably. Moreover, in the parameter regions where teams cheat or do not cheat with certainty, the equilibria remain the same regardless of perfect or imperfect information.

In contrast to \citet{berentsen2002economics} who assumes maximum effort and thus avoids the strategic considerations arising from effort responses to cheating, our model allows contestants to choose their effort levels conditional on cheating decisions. This added complexity, combined with the involvement of a second party that influences the cheating decision, makes the complete information assumption essential to maintain tractability.


Finally, with probability $p$, independent audits uncover cheating 
if it has occurred. If a team is caught cheating, the team has to pay a fine $f$. 
The proportion $\eta \in [0,1]$ of the fine is borne by the contestant, while the manager pays the share $(1-\eta)$. 
Note that the parameter $\eta$ captures the liability regime. The case of $\eta=1$ implies  
individual liability, where the contestant is liable alone. For $\eta<1$ we have joint
liability. Lower values of $\eta$ correspond to regimes that put higher liability on 
the managers. An interesting special case is $\eta=r$, where fines and
contest revenues are shared at the same rate.

Given this setup, contestant $ci$'s expected payoff is 
\begin{equation}
\pi_{ci}(e_{i},e_{j};d_{i},d_{j})=rS_{i}(e_{i},e_{j},d_{i},d_{j})-e_{i}-d_{i}pf\eta,\label{eq:payoff}
\end{equation}
where $rS_{i}$ is team $i$'s prize share that goes to the contestant, $d_{i}pf$
is the expected fine, $\eta$ is the proportion of the fine levied on the contestant, and effort cost are $e_i$. The manager $mi$ receives the remainder of the price and has to bear his share of the expected find but does not have any effort cost:
\[
\pi_{mi}(e_{i},e_{j};d_{i},d_{j})=(1-r)S_{i}(e_{i},e_{j},d_{i},d_{j})-(1-\eta)d_{i}pf.
\]

Without loss of generality, we normalize the total prize to one, which implies that
the fine parameter $f<1$ can be interpreted as fraction of the total prize that a team has to pay if caught cheating.
Similarly, effort costs are measured as a fraction of the total prize.
The prize share $S_{i}$ is determined by a standard Tullock contest
function, where efforts are augmented by an effectiveness factor $\theta$,
which depends on the implemented cheating decision $d_{i}$. Cheating
makes effort more effective in delivering a share of the prize (i.e.
$\theta$$\left(d_{i}=1\right)>\theta\left(d_{i}=0\right)$). For
simplicity, we assume 

\begin{equation}
\theta(d_{i})=\begin{cases}
1 & \textrm{if }d_{i}=0\\
\delta & \textrm{if }d_{i}=1,
\end{cases}\label{eq:delta}
\end{equation}

\noindent where $\delta>1$ captures the effectiveness of cheating. Under these assumptions team $i$'s share becomes:
\[
S_{i}(e_{i},e_{j},d_{i},d_{j})=\begin{cases}
\frac{\theta(d_{i})e_{i}}{\theta(d_{i})e_{i}+\theta(d_{j})e_{j}} & \textrm{if }e_{i}+e_{j}>0\\
\frac{\theta(d_{i})}{\theta(d_{i})+\theta(d_{j})} & \textrm{if }e_{i}+e_{j}=0.
\end{cases}
\]

Note that there is an agency problem which cannot be removed, even if
the liability is shared in the same way as the revenues from the contest.
This agency friction arises from the contestant having to exert effort, which causes
costs that are not shared with the manager.

\subsection{Optimal efforts}

In what follows, we will use backward induction to characterize the Subgame-Perfect Nash equilibria of the game. We start solving the model by first determining the equilibrium efforts conditional on the implemented cheating decisions. The Lemma below describes the equilibrium efforts.
\begin{lem}
\noindent \label{lemma1}In equilibrium, effort levels are identical for both contestants: 
\begin{equation}
e_{i}^{*}=e_{j}^{*}=\frac{r\theta(d_{i})\theta(d_{j})}{\left(\theta(d_{i})+\theta(d_{j})\right){}^{2}}.
\end{equation}
\end{lem}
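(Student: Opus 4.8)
The plan is to solve the stage-two effort subgame by standard best-response analysis, holding the observed cheating decisions $d_i,d_j$ — and hence the effectiveness weights $\theta(d_i),\theta(d_j)$ — fixed. The first observation I would make is that in contestant $ci$'s payoff $\pi_{ci}$ the expected-fine term $d_i p f \eta$ does not depend on own effort $e_i$, so it plays no role in the effort choice and vanishes upon differentiation. What remains is a standard two-player Tullock contest in which the two efforts enter weighted by $\theta(d_i)$ and $\theta(d_j)$, the prize is scaled by $r$, and effort cost is linear. This reduction makes the whole computation routine.

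Next I would take the interior first-order condition. Writing $\theta_i$ for $\theta(d_i)$ and differentiating $r\,\theta_i e_i/(\theta_i e_i + \theta_j e_j) - e_i$ with respect to $e_i$, the numerator of the derivative of the contest share collapses to $\theta_i\theta_j e_j$, so the condition reads $r\,\theta_i\theta_j e_j/(\theta_i e_i + \theta_j e_j)^2 = 1$, with the symmetric condition holding for $cj$. Dividing the two first-order conditions cancels the common denominator and yields $e_j=e_i$ at once, establishing that efforts must coincide in any interior equilibrium. Substituting $e_i=e_j=:e^*$ back into either condition leaves a linear equation in $e^*$ whose solution is exactly the claimed $e^* = r\,\theta(d_i)\theta(d_j)/(\theta(d_i)+\theta(d_j))^2$.

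The step that needs genuine care — and which I expect to be the main obstacle — is showing that this stationary point is actually a Nash equilibrium rather than merely a critical point, which I would handle in two parts. First, I would compute the second derivative of $\pi_{ci}$ in $e_i$, equal to $-2r\,\theta_i^2\theta_j e_j/(\theta_i e_i + \theta_j e_j)^3$, which is strictly negative whenever $e_j>0$; hence each contestant's payoff is strictly concave in own effort given positive rival effort, so the first-order condition pins down the unique best response. Second, I would rule out the boundary profile $e_i+e_j=0$: against a rival exerting zero effort, a contestant wins essentially the whole prize with arbitrarily small positive effort, so the all-zero profile cannot be an equilibrium and the relevant solution must be interior. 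Combined with the symmetry derived above, these two facts confirm that the symmetric interior profile is the unique equilibrium and complete the argument.
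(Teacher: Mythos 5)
Your proposal is correct and follows essentially the same route as the paper: the interior first-order condition $r\,\theta(d_i)\theta(d_j)e_j/(\theta(d_i)e_i+\theta(d_j)e_j)^2=1$, symmetry of the two conditions forcing $e_i=e_j$, and back-substitution to obtain $e^*=r\,\theta(d_i)\theta(d_j)/(\theta(d_i)+\theta(d_j))^2$. Your additional verification of strict concavity in own effort and the elimination of the all-zero profile is sound extra rigor that the paper's terse proof omits, but it does not change the underlying argument.
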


\begin{proof}
\noindent The first-order condition for contestant $ci$ yields:
\begin{equation}
\frac{r e_j \theta(d_{i})\theta(d_{j})}{\left(e_i \theta(d_{i})+e_j \theta(d_{j})\right){}^{2}}=1,
\label{eq:breff}
\end{equation}
which implies $e_{i}^{*}=e_{j}^{*}$, as the fraction is invariant to a swap of the indices. Solving gives the stated result.
\end{proof}

Optimal efforts show an interesting dependence on the implemented
cheating decisions. The contestants exert the highest efforts when
the contest is even. So if both either cheat or both abstain from
cheating, then nobody has an advantage in the contest, and equilibrium
efforts are those of a standard Tullock contest. Efforts 
are lower in an unequal contest, where one contestant has the advantage of
being the sole cheat:
\begin{equation}
e_{i}^{*}\left(d_{i},d_{j}\right)=\begin{cases}
\frac{r}{4} & \textrm{if }d_{i}=d_{j}\\
\frac{r\delta}{\left(1+\delta\right)^{2}} & \textrm{if }d_{i}\neq d_{j}.
\end{cases}\label{eq:estar}
\end{equation}

This has interesting implications for the
strategic link between efforts and cheating.
Cheating and efforts are complements if the competitor cheats and
substitutes if the competitor does not cheat. 

We further note that the value of $\eta$ does not impact
the equilibrium efforts, since it does not change the marginal cost or benefits of effort exertion. 
This implies that Lemma \ref{lemma1} applies under
all liability schemes and shifting liability is effort neutral, as long as it does not change the cheating decisions.

\subsection{The contestant's cheating decision}

In what follows, we determine the optimal cheating decision of a contestant
conditional on the believed probability that the other team cheats. Denote the optimal 
cheating decision of contestant $i$ given the believed cheating probability of the other
team $\mu_j$ as $\sigma^{*}_{ci}(\mu_j)$.
In order to determine the optimal cheating decision of the contestant,
we calculate the continuation payoffs depending on the implemented
cheating decisions for contestant $ci$ by substituting the optimal efforts from (\ref{eq:estar})
into the payoff function in (\ref{eq:payoff}):
\[
\pi_{ci}^{*}(d_{i},d_{j})=r\left(\frac{\theta(d_{i})}{\theta(d_{i})+\theta(d_{j})}\right)^{2}-d_{i}pf\eta.
\]

This allows us to calculate the expected payoffs for contestant $ci$ if her team cheats $\mathbb{E}\pi_{ci}^*(1,\mu_j)$ and if her team stays clean $\mathbb{E}\pi_{ci}^*(0,\mu_j)$:
\begin{align}
    \mathbb{E}\pi_{ci}^*(1,\mu_j)&=r\left(\frac{\mu_j}{4}+(1-\mu_j)\left(\frac{\delta}{1+\delta}\right)^2 \right)-p f \eta \\
    \mathbb{E}\pi_{ci}^*(0,\mu_j)&=r\left(\frac{\mu_j}{(1+\delta)^2}+\frac{1-\mu_j}{4}\right)
\end{align}

Observe that $\mathbb{E}\pi_{ci}^*(1,\mu_j)$ decreases in $\mu_j$, while $\mathbb{E}\pi_{ci}^*(0,\mu_j)$ increases in the other team's cheating probability.\footnote{To see this note that $1/(1+\delta)^2<1/4<(\delta/(1+\delta))^2$, as $\delta>1$.}  This implies that the payoff difference between cheating and not cheating decreases in the other team's cheating probability.

\begin{defn}
Define the belief $\mu_j$ that equalizes the expected payoffs for cheating and not cheating as $\hat{\mu}$
\begin{equation*}
    \hat{\mu}:=\left\{\mu_j \in \mathbb{R} : \mathbb{E}\pi_{ci}^*(0,\mu_j) = \mathbb{E}\pi_{ci}^*(1,\mu_j) \right\}
\end{equation*}
\begin{equation}
    \hat{\mu}=\frac{r\left( \left(\frac{\delta}{1+\delta}\right)^{2}-\frac{1}{4}\right)-pf\eta}{r \left( \frac{1+\delta^{2}}{\left(1+\delta\right)^{2}}-\frac{1}{2}\right)}.\label{eq:crith}
\end{equation}
\end{defn}

The contestant prefers her team to cheat whenever the other team's cheating probability is below $\hat{\mu}$, is indifferent if they are equal, and prefers her team not to cheat if $\mu_j$ exceeds $\hat{\mu}$. This, together with the observation that increasing the probability that the own team cheats is optimal whenever the contestant prefers her team to cheat, allows us to pin down the best responses of the contestants. 

\begin{prop}
\label{Propconoptimal}Contestant $ci$'s best response to the opposing team's cheating probability is given by:

\begin{eqnarray}
\sigma_{ci}^{*}(\mu_j) & = & \begin{cases}
0 & \text{if \ensuremath{\mu_{j}>\hat{\mu}}}\\
\in[0,1] & \text{if \ensuremath{\mu_{j}=\hat{\mu}}}\\
1 & \text{if \ensuremath{\mu_{j}<\hat{\mu}}}
\end{cases}.\label{eq:contcheat}
\end{eqnarray}
\end{prop}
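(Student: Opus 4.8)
The plan is to reduce the statement to the sign of a single scalar and then to a one-dimensional linear optimization. First I would introduce the payoff gap from cheating,
\[
\Delta(\mu_j) := \mathbb{E}\pi_{ci}^*(1,\mu_j) - \mathbb{E}\pi_{ci}^*(0,\mu_j),
\]
and invoke the monotonicity established just before the Definition: $\Delta$ is strictly decreasing in $\mu_j$ (both continuation payoffs are affine in $\mu_j$, and the footnote inequality $1/(1+\delta)^2 < 1/4 < (\delta/(1+\delta))^2$ pins down that the gap's slope is negative). Because $\hat{\mu}$ is defined precisely by $\Delta(\hat{\mu})=0$, strict monotonicity immediately delivers the sign pattern $\Delta(\mu_j)>0 \iff \mu_j<\hat{\mu}$, $\Delta(\mu_j)=0 \iff \mu_j=\hat{\mu}$, and $\Delta(\mu_j)<0 \iff \mu_j>\hat{\mu}$.

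Second, I would translate the contestant's intention $\sigma_{ci}$ into her objective through the random-dictator rule. The team cheats with probability $\mu_i=(\sigma_{ci}+\sigma_{mi})/2$, and since the continuation payoff is a mixture over the two team outcomes weighted by $\mu_i$, the contestant's expected payoff is affine in $\mu_i$:
\[
\mathbb{E}\pi_{ci}(\sigma_{ci};\mu_j) = \mathbb{E}\pi_{ci}^*(0,\mu_j) + \frac{\sigma_{ci}+\sigma_{mi}}{2}\,\Delta(\mu_j).
\]
Holding the manager's strategy $\sigma_{mi}$ fixed, this is affine in $\sigma_{ci}\in[0,1]$ with constant slope $\tfrac12\Delta(\mu_j)$, which notably does not depend on $\sigma_{mi}$.

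Third, maximizing an affine function over $[0,1]$ is immediate: a strictly positive slope ($\mu_j<\hat{\mu}$) forces the unique maximizer $\sigma_{ci}^*=1$; a strictly negative slope ($\mu_j>\hat{\mu}$) forces $\sigma_{ci}^*=0$; and a zero slope ($\mu_j=\hat{\mu}$) leaves every $\sigma_{ci}^*\in[0,1]$ optimal. This is exactly the claimed best-response correspondence.

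The only step with any content is the second one: I must make explicit that the contestant affects her payoff solely through the team cheating probability $\mu_i$, and that under the random-dictator mechanism $\mu_i$ is increasing and affine in her own intention given $\sigma_{mi}$. Once that affine representation is in place, the direction in which she wishes to move $\sigma_{ci}$ is governed entirely by the sign of $\Delta(\mu_j)$ and is independent of the manager's behavior, so the three-way case split follows mechanically. I would keep the write-up short, since the first and third steps are routine once the affine form is recorded.
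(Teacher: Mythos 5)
Your proposal is correct and takes essentially the same route as the paper, which proves this proposition inline in the preceding text: the sign of the payoff gap $\Delta(\mu_j)$ relative to the indifference point $\hat{\mu}$, combined with the observation that the contestant's intention moves the team's cheating probability monotonically under the random-dictator rule. Your only addition is recording the affine representation $\mathbb{E}\pi_{ci}^*(0,\mu_j)+\tfrac{\sigma_{ci}+\sigma_{mi}}{2}\Delta(\mu_j)$ explicitly, which the paper states only verbally.
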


Note that $\hat{\mu}$ is not necessarily between zero and one. For the case that $\hat{\mu}<0$ the contestants have a dominant strategy not to cheat. If, on the other hand, $\hat{\mu}>1$ then the contestant's dominant strategy is to cheat. 

Furthermore, $\hat{\mu}$ decreases with the expected fine levied on the contestant $pf\eta$. This implies that, in line with intuition, the incentive to cheat for the contestant increases if some of the liability is shifted to the manager. 

\subsection{The manager's cheating decision}

Let us now consider the managers' cheating decisions. Manager $mi$'s
continuation payoff anticipating optimal efforts is given by:
\begin{equation}
  \pi_{mi}^{*}(d_{i},d_{j})=(1-r)\left(\frac{\theta(d_{i})}{\theta(d_{i})+\theta(d_{j})}\right)-d_{i}pf(1-\eta).  
\end{equation}

Calculating the payoff for $mi$ if his team cheats $\mathbb{E}\pi_{mi}^*(1,\mu_j)$ and if his team plays by the rules $\mathbb{E}\pi_{mi}^*(0,\mu_j)$ yields:
\begin{align}
    \mathbb{E}\pi_{mi}^*(1,\mu_j)&=(1-r)\left(\frac{\mu_j}{2}+\frac{\delta(1-\mu_j)}{1+\delta}\right)-p f(1- \eta) \\
    \mathbb{E}\pi_{ai}^*(0,\mu_j)&=(1-r)\left(\frac{\mu_j}{1+\delta}+\frac{1-\mu_j}{2}\right)
\end{align}
The difference between the expected payoff from cheating and not cheating is:
\begin{equation*}
    \mathbb{E}\pi_{mi}^*(1,\mu_j)-\mathbb{E}\pi_{mi}^*(0,\mu_j)=(1-r)\left(\frac{1}{2}-\frac{1}{1+\delta}\right)-p f(1- \eta),
\end{equation*}
which does not depend on $\mu_j.$ Hence, depending on the sign of the payoff difference, the manager has a dominant strategy to cheat, not to cheat or is indifferent. The difference increases with the fraction of the fine that is borne by the contestant $\eta$. 

\begin{defn}
Define the contestant's liability $\eta$ that equalizes the manager's expected payoffs for cheating and not cheating as $\hat{\eta}$
\begin{equation*}
    \hat{\eta}:=\left\{\eta \in \mathbb{R} : \mathbb{E}\pi_{mi}^*(0,\mu_j) = \mathbb{E}\pi_{mi}^*(1,\mu_j) \right\}
\end{equation*}
\begin{equation}
    \hat{\eta}=1-\frac{(\delta-1)(1-r)}{(1+\delta)2fp}.\label{eq:crit_eta}
\end{equation}
\end{defn}
The higher the cheating efficiency ($\delta$), the fraction of the prize the manager receives ($1-r$) and the lower the expected fine ($pf$) the higher is the minimum fraction of the fine that must be levied on the manager ($1-\hat{\eta}$) to deter cheating. 

\begin{prop}
\label{Propmanjo} The manager's optimal choice does not depend on the other team's doping probability and is given by:
\begin{equation}
\sigma_{mi}^{*}=\begin{cases}
0 & \text{if \ensuremath{\eta<\hat{\eta}}}\\
\in[0,1] & \text{if \ensuremath{\eta=\hat{\eta}}}\\
1 & \text{if \ensuremath{\eta>\hat{\eta}}}
\end{cases}
\end{equation}
\end{prop}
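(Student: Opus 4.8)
The plan is to exploit the fact, already derived in the lines preceding the definition of $\hat{\eta}$, that the manager's payoff difference between cheating and abstaining,
\[
\Delta_m(\eta):=\mathbb{E}\pi_{mi}^*(1,\mu_j)-\mathbb{E}\pi_{mi}^*(0,\mu_j)=(1-r)\left(\frac{1}{2}-\frac{1}{1+\delta}\right)-pf(1-\eta),
\]
is independent of $\mu_j$. This single observation delivers the first claim of the proposition directly: since $\Delta_m$ does not contain $\mu_j$, the sign of the manager's incentive to cheat, and hence his optimal choice, cannot depend on what the other team is expected to do. The manager therefore faces a problem with a dominant action (or indifference) rather than a genuine best-response correspondence in $\mu_j$.

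The second step is to record the monotonicity of $\Delta_m$ in $\eta$. Because $\delta>1$, the bracketed term $\tfrac{1}{2}-\tfrac{1}{1+\delta}=\tfrac{\delta-1}{2(1+\delta)}$ is strictly positive, and the only place $\eta$ enters is through $-pf(1-\eta)=pf\eta-pf$, which is strictly increasing in $\eta$ for $pf>0$. Hence $\Delta_m$ is a strictly increasing affine function of $\eta$ and crosses zero exactly once. By construction, $\hat{\eta}$ is precisely that crossing point, i.e.\ the unique solution of $\Delta_m(\eta)=0$; solving the displayed equation for $\eta$ recovers the closed form in (\ref{eq:crit_eta}).

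Finally, I would translate the sign of $\Delta_m$ into the optimal mixing probability by a simple case split. Strict monotonicity gives $\Delta_m(\eta)>0$ whenever $\eta>\hat{\eta}$, so cheating strictly dominates and $\sigma_{mi}^{*}=1$; symmetrically, $\eta<\hat{\eta}$ yields $\Delta_m(\eta)<0$ and $\sigma_{mi}^{*}=0$; and at $\eta=\hat{\eta}$ the manager is exactly indifferent, so every $\sigma_{mi}^{*}\in[0,1]$ is a best response. This reproduces the three branches of the statement. There is essentially no hard step here: the entire argument rests on the $\mu_j$-independence of $\Delta_m$, which the linearity of the manager's continuation payoff in the contest shares makes transparent. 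The only thing to be careful about is the maintained sign conditions $pf>0$ and $\delta>1$, which are what guarantee strict monotonicity and therefore a single, well-defined threshold $\hat{\eta}$.
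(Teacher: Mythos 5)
Your proof is correct and follows essentially the same route as the paper, which establishes the proposition directly from the observation that the manager's payoff difference $(1-r)\left(\tfrac{1}{2}-\tfrac{1}{1+\delta}\right)-pf(1-\eta)$ is independent of $\mu_j$ and strictly increasing in $\eta$, with $\hat{\eta}$ defined as its unique zero. Your additional care about the sign conditions $pf>0$ and $\delta>1$ is a nice touch but does not change the substance.
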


\subsection{Equilibria}
We have established that the managers' optimal decisions depend only on the parameters, while the contestants' optimal cheating decisions typically depend on the cheating probability of the opposing team. Hence, constructing an equilibrium begins with fixing the managers' decisions. Let us begin by looking for equilibria, where both managers cheat. This requires a parameter constellation, where $\eta \in [\hat{\eta},1]$. In this region, the penalty share $1-\eta$ the manager bears must be sufficiently low. Now observe that for $\hat{\mu}\geq 1$, we have a unique pure-strategy equilibrium where all four actors cheat. Such an equilibrium occurs when the expected fine $pf$ is negligible. 

Next, suppose that the environment is less favorable for cheating such that $\hat{\mu}\in[1/2,1).$ In this case we obtain two asymmetric equilibria where, on top of the two managers, one of the contestants cheats. This is the case as the cheating probability of the team with the cheating contestant increases to unity, so that the other contestant's best response is not to cheat. This in turn ensures that the cheating contestant does not want to deviate, since the opposing team's cheating probability is $1/2$ and does not exceed $\hat{\mu}$. The average cheating probability is $3/4$. Note that in this case a mixed-strategy equilibrium, where the contestants randomize, also exists. Both contestants cheat with probability $2\hat{\mu}-1$ in order to make the other contestant indifferent. The average cheating probability in this equilibrium becomes $\hat{\mu}$. 

In an even less favorable environment for cheating with $\hat{\mu}\in[0,1/2]$, in which $\eta \geq \hat{\eta}$ still holds, we have an equilibrium, where both contestants choose not to cheat. The average cheating probability is $1/2$. Denoting the probability of a randomly selected team cheating 
\[\bar{p}=\left(\sigma_{m1}^{*}+\sigma_{a1}^{*}+\sigma_{m2}^{*}+\sigma_{a2}^{*}\right)/4, \]
we can summarize the findings discussed above in the following Proposition.

\begin{prop}
\label{propeqm} For $\eta \in [\hat{\eta},1]$ and $i,j \in\{1,2\}, i \neq j$ we obtain the following equilibria, where both managers cheat:
\begin{enumerate}
\item $\hat{\mu}\geq1$:

$\left(\sigma_{mi}^{*},\sigma_{ci}^{*};\sigma_{mj}^{*},\sigma_{cj}^{*}\right)=\left(1,1;1,1\right)$
; $\bar{p}=1$

\item $\hat{\mu}\in\left[1/2,1\right]$:

$\left(\sigma_{mi}^{*},\sigma_{ci}^{*};\sigma_{mj}^{*},\sigma_{cj}^{*}\right)=\left(1,1;1,0\right),\left(1,0;1,1\right)$;
$\bar{p}=3/4$;

$\left(\sigma_{mi}^{*},\sigma_{ci}^{*};\sigma_{mj}^{*},\sigma_{cj}^{*}\right)=\left(1,2\hat{\mu}-1;1,2\hat{\mu}-1\right)$; $\bar{p}=\hat{\mu}$  

\item $\hat{\mu}\leq 1/2$:

$\left(\sigma_{mi}^{*},\sigma_{ci}^{*};\sigma_{mj}^{*},\sigma_{cj}^{*}\right)=\left(1,0;1,0\right)$;
$\bar{p}=1/2$.
\end{enumerate}
\end{prop}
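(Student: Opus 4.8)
The plan is to reduce the four-player game to a two-player game between the contestants and then apply the best responses already derived. First I would fix the managers: since $\eta \in [\hat{\eta},1]$, Proposition \ref{Propmanjo} gives $\sigma_{mi}^* = \sigma_{mj}^* = 1$, strictly dominant for $\eta > \hat{\eta}$ and a consistent indifferent choice at $\eta = \hat{\eta}$. With both managers cheating, the random-dictator mechanism makes team $i$'s cheating probability $\mu_i = (1+\sigma_{ci}^*)/2 \in [1/2,1]$, so the problem collapses to the choice of the pair $(\sigma_{ci}^*, \sigma_{cj}^*)$.

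Next I would rewrite the contestant best response of Proposition \ref{Propconoptimal} in terms of the opponent contestant's mixing probability. Since $\mu_j = (1+\sigma_{cj}^*)/2$, comparing $\mu_j$ with $\hat{\mu}$ is equivalent to comparing $\sigma_{cj}^*$ with $2\hat{\mu}-1$. Hence $ci$ strictly prefers to cheat when $\sigma_{cj}^* < 2\hat{\mu}-1$, strictly prefers to abstain when $\sigma_{cj}^* > 2\hat{\mu}-1$, and is indifferent at equality, with the symmetric statement for $cj$. Because the payoff difference falls in the opponent team's cheating probability (the observation recorded just before the definition of $\hat{\mu}$), this is a symmetric anti-coordination (chicken) game governed by the single threshold $2\hat{\mu}-1$.

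The case split on $\hat{\mu}$ then follows directly. When $\hat{\mu} > 1$ we have $2\hat{\mu}-1 > 1 \geq \sigma_{cj}^*$ for every feasible opponent strategy, so cheating is the strict best response for both contestants and $(\sigma_{ci}^*, \sigma_{cj}^*) = (1,1)$ is the unique profile, giving item 1. When $\hat{\mu} < 1/2$ we have $2\hat{\mu}-1 < 0 \leq \sigma_{cj}^*$, so abstaining is the strict best response and $(0,0)$ is the unique profile, giving item 3. For the intermediate range $\hat{\mu} \in [1/2,1]$, i.e. $2\hat{\mu}-1 \in [0,1]$, the anti-coordination structure delivers the two asymmetric pure equilibria: in $(1,0)$ the cheater faces opponent cheating probability $0 \leq 2\hat{\mu}-1$ and therefore cheats, while the abstainer faces $1 \geq 2\hat{\mu}-1$ and therefore abstains, and the mirror profile $(0,1)$ is checked identically. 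The symmetric mixed equilibrium requires both contestants to be indifferent, i.e. $\sigma_{ci}^* = \sigma_{cj}^* = 2\hat{\mu}-1$, which is exactly the stated profile and is a valid probability precisely on this range.

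Finally I would substitute each surviving contestant profile, together with $\sigma_{mi}^* = \sigma_{mj}^* = 1$, into $\bar{p} = (\sigma_{m1}^* + \sigma_{c1}^* + \sigma_{m2}^* + \sigma_{c2}^*)/4$, which returns $\bar{p} = 1$, $3/4$, $\hat{\mu}$ and $1/2$ in the four cases. The main obstacle is not this verification but the boundary behavior: at the knife-edge values $\hat{\mu} = 1$ and $\hat{\mu} = 1/2$ the threshold $2\hat{\mu}-1$ coincides with an endpoint of $[0,1]$, so the indifference clause of the best response admits extra, non-generic equilibria (for instance a profile in which one contestant abstains while the other mixes freely). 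I would therefore present the listed profiles as the equilibria holding on the open intervals and note that the closed-interval formulation is exactly what makes items 1 and 2 overlap at $\hat{\mu}=1$, and items 2 and 3 overlap at $\hat{\mu}=1/2$.
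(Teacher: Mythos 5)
Your proof is correct, but it and the paper's written proof do almost disjoint work. You give a careful formalization of the equilibrium construction: fixing $\sigma_{mi}^{*}=\sigma_{mj}^{*}=1$ via Proposition \ref{Propmanjo}, collapsing the game to a symmetric anti-coordination game between the contestants with threshold $2\hat{\mu}-1$, reading off the pure and mixed profiles and their $\bar{p}$ values, and sensibly flagging the knife-edge overlaps at $\hat{\mu}=1$ and $\hat{\mu}=1/2$. The paper treats exactly this part as ``straight-forward and explained above'' (it appears only as informal text preceding the proposition) and instead devotes its entire formal proof to the one point you omit: non-vacuousness, i.e.\ exhibiting parameter values for which $\eta\in[\hat{\eta},1]$ is simultaneously compatible with each of the three regimes for $\hat{\mu}$ (it checks that $\eta=1$ is compatible with $\hat{\mu}=1$ and that $\eta=\hat{\eta}$ is compatible with $\hat{\mu}=1/2$, solving for the required $pf$ in each case). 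So your route proves the conditional statement that the proposition literally asserts, and does so more rigorously than anything written in the paper, while the paper's route guarantees the statement is not empty but never verifies that the listed profiles are in fact equilibria. A fully self-contained proof would combine the two: your construction plus a feasibility check. If you do add that check, note that the paper's displayed requirement for the case $\eta=1$, $\hat{\mu}=1$ appears to contain a sign typo; the correct condition is $pf=r(\delta-1)(\delta+3)/\left(4(1+\delta)^{2}\right)$, which is feasible since it is positive and below $r/4$.
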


\begin{proof}
The construction of the equilibria is straight-forward and was explained above. If remains to show that parameter configurations exist, where $\eta \in [\hat{\eta},1]$ and (a) $\hat{\mu}\geq1$, (b) $\hat{\mu}\in\left[1/2,1\right]$, and (c) $\hat{\mu}\leq 1/2$ hold, such that all equilibra exist. It is sufficient to show that $\eta \in [\hat{\eta},1]$ is compatible with $\hat{\mu}=1$ and with $\hat{\mu}=1/2$. First, let $\eta=1$ such that $\eta \in [\hat{\eta},1]$ is satisfied. substituting into (\ref{eq:crith}), $\hat{\mu}=1$ requires 
\begin{equation*}
    pf=\frac{r(\delta+1)(\delta+3)}{4(1-\delta)^2},    
\end{equation*}
which is feasible. Next let $\eta=\hat{\eta}$, such that $\eta \in [\hat{\eta},1]$ holds and substitute (\ref{eq:crit_eta}) into (\ref{eq:crith}) and set equal to $1/2$, which requires:
\begin{equation*}
    pf=\frac{\delta-1}{2(\delta+1)},
\end{equation*}
which is also feasible.
\end{proof}

Next, we look for equilibria in which managers have a dominant strategy not to cheat. This is the case whenever $\eta \in [0,\hat{\eta}]$. The construction of equilibria is very similar as above. Given that managers now do not cheat, the maximum probability for a team to cheat is $1/2$. Hence, there is a pure-strategy equilibrium where both contestants cheat if the parameters are such that contestants find it preferable to cheat if the other team's cheating probability is equal to $1/2$, which is the case whenever, $\hat{\mu} \geq 1/2$. Whenever $\hat{\mu} \in [0,1/2]$ we obtain two asymmetric equilbria, where one of the two contestants cheats while the other does not. The third equilibrium in this parameter region is symmetric, where both contestants mix. Finally, there is also an equilibrium where all four agents do not cheat. For this we require $\hat{\mu}\leq 0$. In the following Proposition, we state these findings more formally.

\begin{prop} \label{propeqm2}
 For $\eta \in [0,\hat{\eta}]$ and $i,j \in\{1,2\}, i \neq j$ we obtain the following equilibria, where both managers do not cheat.
\begin{enumerate}
\item $\hat{\mu}\geq 1/2$:

$\left(\sigma_{mi}^{*},\sigma_{ci}^{*};\sigma_{mj}^{*},\sigma_{cj}^{*}\right)=\left(0,1;0,1\right)$
; $\bar{p}=1/2$

\item $\hat{\mu}\in\left[0,1/2\right]$:

$\left(\sigma_{m1}^{*},\sigma_{ci}^{*};\sigma_{mj}^{*},\sigma_{cj}^{*}\right)=\left(0,1;0,0\right),\left(0,0;0,0\right)$;
$\bar{p}=1/4$;

$\left(\sigma_{mi}^{*},\sigma_{ci}^{*};\sigma_{mj}^{*},\sigma_{cj}^{*}\right)=\left(0,2\hat{\mu};0,2\hat{\mu}\right)$;$\bar{p}=\hat{\mu}$  

\item $\hat{\mu}\leq 0$:

$\left(\sigma_{mi}^{*},\sigma_{ci}^{*};\sigma_{mj}^{*},\sigma_{cj}^{*}\right)=\left(0,0;0,0\right)$;
$\bar{p}=0$.
\end{enumerate}
\end{prop}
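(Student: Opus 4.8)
The plan is to mirror the proof of Proposition \ref{propeqm}, exploiting the fact that the construction has already been described in the paragraph preceding the statement. Two ingredients are needed: (i) that each listed strategy profile is a mutual best response, and (ii) that the parameter regions indexing the three cases are non-empty.

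For part (i), I would first record that fixing $\eta\in[0,\hat{\eta}]$ makes not cheating a dominant strategy for both managers by Proposition \ref{Propmanjo}, so $\sigma_{mi}^{*}=\sigma_{mj}^{*}=0$ and no manager deviation can be profitable regardless of the contestants' play. This reduces the problem to a two-player game between the contestants with team cheating probabilities $\mu_i=\sigma_{ci}^{*}/2$, hence $\mu_i,\mu_j\in[0,1/2]$. I then apply the contestant best-response map from Proposition \ref{Propconoptimal} case by case: in case 1 ($\hat{\mu}\geq 1/2$), any $\mu_j\le 1/2\le\hat{\mu}$ makes cheating a best response, so $(\sigma_{ci}^{*},\sigma_{cj}^{*})=(1,1)$ is an equilibrium with $\mu_i=\mu_j=1/2$; in case 2 ($\hat{\mu}\in[0,1/2]$), for the asymmetric profiles the cheating contestant faces $\mu_j=0\le\hat{\mu}$ (best response to cheat) while the clean contestant faces $\mu_i=1/2\ge\hat{\mu}$ (best response not to cheat), and for the mixed profile setting $\sigma_{ci}^{*}=2\hat{\mu}$ yields $\mu_i=\mu_j=\hat{\mu}$, which makes each contestant exactly indifferent and hence willing to randomize; in case 3 ($\hat{\mu}\leq 0$), every $\mu_j\ge 0\ge\hat{\mu}$ makes not cheating a best response, giving $(0,0)$. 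The boundary inclusions are consistent because at $\hat{\mu}=1/2$ the clean contestant is indifferent and at $\hat{\mu}=0$ the cheating contestant is indifferent, so in both cases the stated action remains a best response. The reported $\bar{p}$ values then follow by direct substitution into the definition of $\bar{p}$.

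For part (ii), it is enough to show that $\eta\in[0,\hat{\eta}]$ is compatible with the two threshold values $\hat{\mu}=1/2$ and $\hat{\mu}=0$ that separate the cases, after which continuity of $\hat{\mu}$ in the parameters delivers the interior ranges and the extensions to $\hat{\mu}>1/2$ and $\hat{\mu}<0$. The value $\hat{\mu}=1/2$ is the boundary already handled in the proof of Proposition \ref{propeqm}: taking $\eta=\hat{\eta}\in[0,\hat{\eta}]$ and $pf=(\delta-1)/(2(\delta+1))$ gives $\hat{\mu}=1/2$. For $\hat{\mu}=0$, I would set the numerator of (\ref{eq:crith}) to zero, which pins down $\eta=r\big((\delta/(1+\delta))^2-1/4\big)/(pf)$, and then verify $\eta\le\hat{\eta}$; substituting (\ref{eq:crit_eta}) this reduces to the single inequality $pf\geq r\big((\delta/(1+\delta))^2-1/4\big)+(\delta-1)(1-r)/\big(2(1+\delta)\big)$, whose right-hand side tends to $0$ as $\delta\to 1^{+}$, so a feasible $pf\in(0,1)$ with $\eta\in[0,1]$ exists.

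The routine parts are the best-response checks and the substitutions for $\bar{p}$, which are immediate given Propositions \ref{Propconoptimal} and \ref{Propmanjo}. The only place demanding care is the feasibility step for $\hat{\mu}=0$: unlike the $\hat{\mu}=1/2$ boundary, it is not inherited from Proposition \ref{propeqm}, and one must respect the explicit constraint $pf<1$ (together with the implicit non-prohibitiveness assumption $pf\eta<r/4$) while simultaneously keeping $\eta\le\hat{\eta}$. I expect this bookkeeping to be the main obstacle, and I would resolve it by exhibiting an explicit parameter vector (for instance $\delta=2$, $r=1/2$, $p=1$, $f=1/2$, $\eta=7/36$, for which $\hat{\mu}=0$, $\hat{\eta}=5/6$, and $pf\eta=7/72<r/4=1/8$ all hold) rather than relying solely on the limiting argument.
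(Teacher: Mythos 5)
Your proposal is correct and takes essentially the same approach as the paper: part (i) is exactly the best-response construction the paper gives in the paragraph preceding the proposition, and part (ii) matches the paper's feasibility check, since your inequality $pf\geq r\left(\left(\delta/(1+\delta)\right)^2-1/4\right)+(\delta-1)(1-r)/\left(2(1+\delta)\right)$ has as its right-hand side precisely the paper's value $pf=\frac{(\delta-1)(2-r+\delta(2+r))}{4(1+\delta)^2}$ obtained by setting $\eta=\hat{\eta}$ and $\hat{\mu}=0$. Your explicit numerical witness ($\delta=2$, $r=1/2$, $pf=1/2$, $\eta=7/36$) is a sound, and in fact more concrete, substitute for the paper's bare assertion that this $pf$ is feasible.
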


\begin{proof}
In order to show that all these equilibria exist we have to show that $\eta \in [0,\hat{\eta}]$ is compatible with both $\hat{\mu}=0$ and $\hat{\mu}=1/2$. Let $\eta=\hat{\eta}$, such that $\eta \in [\hat{\eta},1]$ holds and substitute (\ref{eq:crit_eta}) into (\ref{eq:crith}) and set equal to $1/2$ and to $0$, which requires:

\begin{gather*}
    pf=\frac{\delta-1}{2(1+\delta)} \ \textrm{and} \\
    pf=\frac{(\delta-1)(2-r+\delta(2+r))}{4(1+\delta)^2},
\end{gather*}

which are both feasible.
\end{proof}

For reasons of brevity, we omit the non-generic equilibria for $\eta=\hat{\eta}$ where managers randomize. Moreover, these equilibria do not yield qualitatively different cheating probabilities. The average cheating probability for any of these equilibria lies between those resulting from the equilibria, where both managers cheat and those where both managers do not cheat.\footnote{Note that for every possible $\hat{\mu}$ for which $\eta=\hat{\eta}$ there exists a continuum of equilibria where any combination of cheating probabilites of the managers is possible and the contestants best-respond to the resulting aggregate cheating probability of the opposing team.}

\subsection{The impact of liability and expected fines on cheating}
The distribution of liability for detected cheating is captured by the parameter $\eta$. Under individual liability the contestant bears the entire negative consequences of a team being caught cheating. In such a setting with $\eta=1$ there is nothing that discourages managers from cheating, and in equilibrium the probability of cheating is at least 50 percent. A natural candidate to remedy this is to shift some of the liability for incurred fines to the managers, as this reduces their cheating incentives. Although reducing $\eta$ reduces the incentive for cheating of the managers, it makes cheating more attractive to contestants. Hence, it is not a priori clear that such a shift will reduce the over-all cheating probability. The relationship between the equilibrium cheating probability and the distribution of fines is complex. This is due to a misalignment of the preferences of the contestants and the managers that originates from the contestants competing in the contest, while the managers stay on the sidelines.

For joint liability to reduce the cheating probability, we require that on the one hand, the resulting share of the fine for the manager (i.e. $1-\eta$) incentivzes the managers to refrain from cheating, while on the other hand the now reduced fine share $\eta$ on the contestants is still sufficient to deter them from cheating. Joint liability will backfire and increase the aggregate cheating likelihood if the shift of the fine share incentivizes the contestants to cheat without deterring the managers. It turns out that the crucial factor that determines whether joint liability can reduce cheating is the expected fine $pf$. If the expected fine is sufficiently high (compared to the efficiency of cheating $\delta$), then joint liability has the potential to reduce cheating without the risk of backfiring. If the expected fine is low, then joint liability in equilibrium will never reduce cheating but might backfire and result in a higher average cheating probability compared to individual liability. 

Before we can formalize the described intuition in a Theorem, we need to define the maximum and minimum expected cheating probability for a certain parameter constellation. This is helpful as we have multiple equilibria for two parameter regions ($\eta\geq\hat{\eta} \land \hat{\mu} \in [0,1/2]$ and $\eta\leq\hat{\eta} \land \hat{\mu} \in [1/2,1]$). Denote the maximum average cheating probability in any equilibrium given the liability share $\eta$ and expected fine $pf$ for given other parameters $r$ and $\delta$ as $\bar{p}_{h}(\eta,pf)$ and the minimum cheating probability as $\bar{p}_{l}(\eta,pf)$. 

\begin{theorem} \label{theorem}
For any given parameter combination of $\delta>1$, $r \in(0,1)$ and $pf$ for which 
\begin{equation*}
    pf>\frac{r(\delta-1)}{2(1+\delta)} 
\end{equation*}
liability $\eta$ has the following impact on expected equilibrium cheating probabilities:

\begin{enumerate}

    \item If $pf \in \left(  \frac{r(\delta-1)}{2(1+\delta)},\frac{\delta-1}{2(1+\delta)}\right)$, then
    
      $\bar{p}_{l}(\eta,pf)\geq \bar{p}_{h}(1,pf) \ \forall \eta \in[0,1]$ and $\exists \eta \in [0,1]$ such that $\bar{p}_{l}(\eta,pf)> \bar{p}_{h}(1,pf)$.
     
     \item If $pf= \frac{\delta-1}{2(1+\delta)}$, then 
     
      $\bar{p}_{l}(\eta,pf)= \bar{p}_{h}(\eta,pf)=1/2 \ \forall \eta \in[0,1]$.
     
      \item If $pf> \frac{\delta-1}{2(1+\delta)}$, then
    
      $\bar{p}_{h}(\eta,pf)\leq \bar{p}_{l}(1,pf) \ \forall \eta \in[0,1]$ and $\exists \eta \in [0,1]$ such that $\bar{p}_{h}(\eta,pf)< \bar{p}_{l}(1,pf)$.
      
\end{enumerate}
\end{theorem}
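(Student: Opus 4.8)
The plan is to collapse all three claims to comparisons with the single number $1/2$, which I expect to be the unique equilibrium cheating rate under individual liability. First I would record that, after simplifying the numerator and denominator of \eqref{eq:crith}, $\hat{\mu}$ is an affine, strictly decreasing function of $\eta$,
\[
\hat{\mu}(\eta)=\frac{3\delta+1}{2(\delta-1)}-\frac{2(1+\delta)^{2}pf}{r(\delta-1)^{2}}\,\eta .
\]
Setting $\hat{\mu}(1)=1/2$ gives exactly $pf=\frac{r(\delta-1)}{2(1+\delta)}$, and since $\hat{\mu}(1)$ decreases in $pf$, the standing hypothesis $pf>\frac{r(\delta-1)}{2(1+\delta)}$ is equivalent to $\hat{\mu}(1)<1/2$. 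Because $\hat{\eta}<1$ always (the subtracted term in \eqref{eq:crit_eta} is positive), at $\eta=1$ the managers cheat and we sit in case~3 of Proposition~\ref{propeqm}, so the equilibrium is unique with $\bar{p}=1/2$; hence $\bar{p}_h(1,pf)=\bar{p}_l(1,pf)=1/2$. The three parts then read: $\bar{p}_l(\eta)\ge 1/2$, $\bar{p}_l=\bar{p}_h=1/2$, and $\bar{p}_h(\eta)\le 1/2$, respectively.

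Next I would locate the manager switch point relative to the contestant switch point. Substituting $\eta=\hat{\eta}$ from \eqref{eq:crit_eta} into $\hat{\mu}(\eta)$ and simplifying yields
\[
\hat{\mu}(\hat{\eta})=\frac{1}{2}+\frac{\delta+1}{r(\delta-1)}-\frac{2(1+\delta)^{2}pf}{r(\delta-1)^{2}},
\]
which is strictly decreasing in $pf$ and equals $1/2$ precisely at $pf=\frac{\delta-1}{2(1+\delta)}$. Therefore $\hat{\mu}(\hat{\eta})\gtrless 1/2$ according as $pf\lessgtr\frac{\delta-1}{2(1+\delta)}$. Writing $\eta_{1/2}$ for the unique solution of $\hat{\mu}(\eta)=1/2$ and using that $\hat{\mu}$ is decreasing, this says: in Part~1 the switch points satisfy $\hat{\eta}<\eta_{1/2}$, in Part~2 $\hat{\eta}=\eta_{1/2}$, and in Part~3 $\hat{\eta}>\eta_{1/2}$. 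I would also record the harmless bounds $\hat{\mu}(0)=\frac{3\delta+1}{2(\delta-1)}>1$ and, under the standing hypothesis, $\eta_{1/2}\in(0,1)$, so the crossing always lies in the interior.

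With these orderings fixed I would trace the equilibrium correspondence across $\eta\in[0,1]$, splitting at $\hat{\eta}$ (managers cheat above it, abstain below it, by Proposition~\ref{Propmanjo}) and at $\eta_{1/2}$, then read $\bar{p}_h,\bar{p}_l$ off Propositions~\ref{propeqm} and~\ref{propeqm2}. For Part~3, where $\eta_{1/2}<\hat{\eta}$: on $[\hat{\eta},1]$ managers cheat with $\hat{\mu}\le 1/2$, giving $\bar{p}=1/2$; on $[\eta_{1/2},\hat{\eta}]$ managers abstain with $\hat{\mu}\le 1/2$, giving $\bar{p}\in\{1/4,\hat{\mu}\}$ (or $0$ if $\hat{\mu}<0$), hence $\bar{p}_h=\max(1/4,\hat{\mu})\le 1/2$; and on $[0,\eta_{1/2}]$ managers abstain with $\hat{\mu}\ge 1/2$, giving $\bar{p}=1/2$. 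Thus $\bar{p}_h(\eta)\le 1/2$ throughout. The mirror-image bookkeeping proves $\bar{p}_l(\eta)\ge 1/2$ in Part~1 (on $[\hat{\eta},\eta_{1/2}]$ managers cheat with $\hat{\mu}\in[1/2,1]$, so $\bar{p}_l=\min(3/4,\hat{\mu})\ge 1/2$), while in Part~2 both regimes collapse to $\bar{p}=1/2$ at every $\eta$. For the strict inequalities I would exhibit one $\eta$ in the middle interval: in Part~3 an $\eta$ just above $\eta_{1/2}$ makes managers abstain with $\hat{\mu}\in(1/4,1/2)$, so the mixed equilibrium gives $\bar{p}_h=\hat{\mu}<1/2$; symmetrically in Part~1 an $\eta$ just below $\eta_{1/2}$ gives $\bar{p}_l=\hat{\mu}>1/2$.

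The main obstacle I anticipate is not any single computation but the bookkeeping: one must take the correct $\min$ and $\max$ over the pure and mixed equilibria listed in Propositions~\ref{propeqm}--\ref{propeqm2} in each subinterval, and confirm that the ordering of the switch points $\hat{\eta},\eta_{1/2}$ (and the auxiliary points where $\hat{\mu}=0$ or $\hat{\mu}=1$) is the claimed one for \emph{every} admissible pair $(r,\delta)$. In particular the edge cases where $\hat{\eta}$ leaves the unit interval---for instance $\hat{\eta}\le 0$ when $r<1/2$ and $pf$ is small in Part~1, so that managers cheat for all $\eta\in[0,1]$---must be checked separately; in each such case the managers-cheat branch of Proposition~\ref{propeqm} alone still delivers $\bar{p}_l(\eta)\ge 1/2$, so the conclusion is unaffected. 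Keeping this case analysis exhaustive, rather than the algebra, is where the care is needed.
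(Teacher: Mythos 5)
Your proposal is correct and takes essentially the same route as the paper's proof: both anchor the comparison at $\bar{p}_h(1,pf)=\bar{p}_l(1,pf)=1/2$, both reduce the three cases to the sign of $pf-\frac{\delta-1}{2(1+\delta)}$ (your evaluation of $\hat{\mu}(\hat{\eta})$ against $1/2$ is algebraically equivalent to the paper's step of summing the inequality defining $\eta\gtrless\hat{\eta}$ with the one defining $\hat{\mu}\gtrless 1/2$), and both then read the attainable cheating rates off Propositions \ref{propeqm} and \ref{propeqm2}. Your interval-by-interval bookkeeping over $[0,1]$, including the checks that $\eta_{1/2}\in(0,1)$ and that $\hat{\eta}\leq 0$ is harmless, is simply a more explicit rendering of the paper's terser ``necessary and sufficient'' argument.
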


\begin{proof}
See appendix.
\end{proof}

\begin{figure}[h]
\begin{tikzpicture}
	\begin{axis}[
		width=.9\textwidth,
		xlabel=$\eta$,
		ylabel=$pf$,
		ymin=0.11, ymax=0.225,
		xmin=0.39, xmax=1.03,
	]

	\addplot [
	    black, 
	    domain=0.5:0.9,
	  ] 
	    {1/(18*(1-x)};

		\addplot [
	    black, 
	    domain=0.3:1,
	  ] 
	    {1/(9*x)};

		\addplot [
	    black, 
	    domain=5/8:1,
	  ] 
	    {5/(54*x)};

		\addplot [
	    black, 
	    domain=0.3:7/10,
	  ] 
	    {7/(54*x)};
	    
	    \addplot [
	    black,
	    dotted
	  ]
	    {1/6};
	    
	    \node[above] at (0.85,0.18) {$\bar{p}=1/2$};
	    
	   \node[above] at (0.5,0.14) {$\bar{p}=1/2$};
	   
	   \node[above] at (0.67,0.12) {$\bar{p}=1$};
	   
	   \node[above] at (0.67,0.21) {$\bar{p}=0$};
	   
	   \node[pin=below left:{$\bar{p} \in (0,1/2)$}] at (0.59,0.2) {};
	   
	   \node[pin=above right:{$\bar{p} \in (1/2,1)$}] at (0.8,0.13) {};
	   
	   \node[above] at (0.44,1/6) {$pf=1/6$};
	   
	   \filldraw[black] (2/3,11/72) circle (2pt) node[anchor=east]{\textit{Jo\_L}};
	   
	   \filldraw[black] (1,11/72) circle (2pt) node[anchor=east]{\textit{Ind\_L}};
	   
	   \filldraw[black] (2/3,13/72) circle (2pt);
	   
	   \node[] at (0.67,0.184) {\textit{Jo\_H}};
	   
	   \filldraw[black] (1,13/72) circle (2pt) node[anchor=east]{\textit{Ind\_H}};
	   
	   \draw[black] (2/3,1/9) circle (2pt) node[anchor=south]{\textit{Jo\_40}};
	   
	   \draw[black] (1,1/9) circle (2pt) node[anchor=south]{\textit{Ind\_40}};
	 
	 \end{axis}
\end{tikzpicture}
\caption{Equilibrium Cheating Probabilities for $\delta=2$ and $r=2/3$} \label{fig: cheatingprob}
\end{figure}
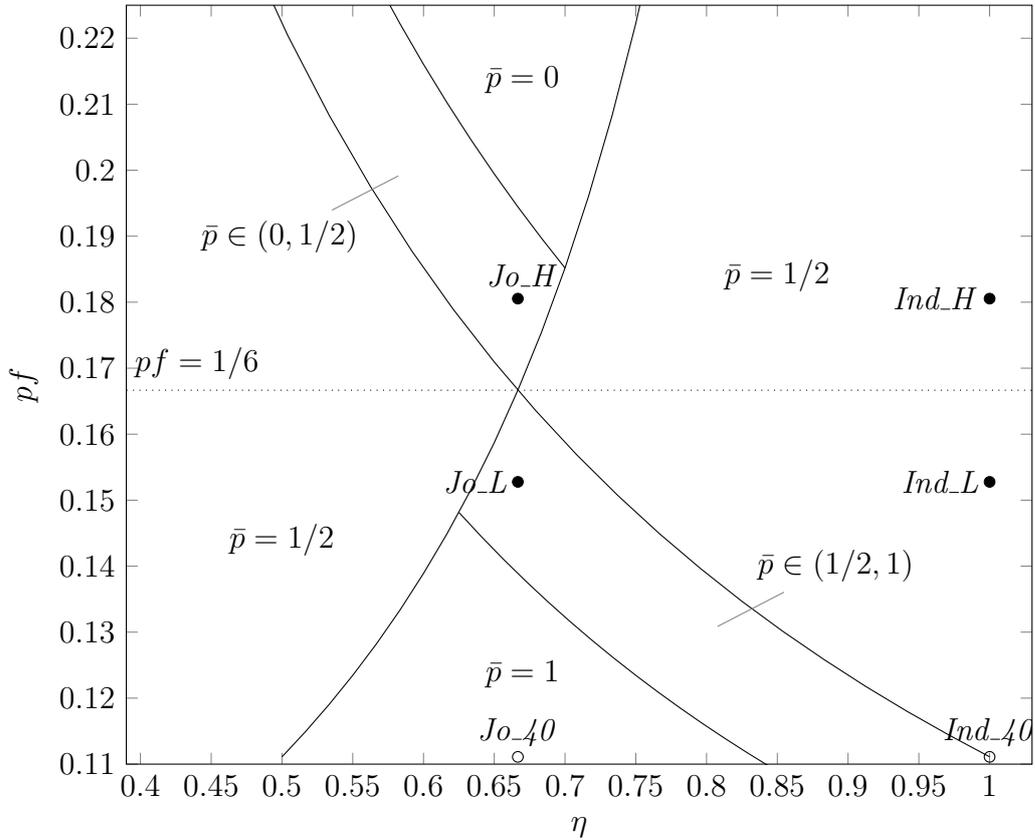

Our theorem establishes that the size of the fine determines if joint liability can help reduce cheating or if it might backfire and increase the cheating likelihood. It covers the most interesting case where the fine is at least high enough to deter the contestants from cheating if they bear the whole fine.\footnote{Note that for lower expected fines backfiring can still be a problem as decreasing $\eta$ can switch the equilibrium from both managers and one contestant to all four agents cheating.} This is the case for $pf>r(\delta-1)/(2(1+\delta))$. Then under a modest fine joint liability will never reduce the rate of cheating but will for some liability share $1-\eta$ actually increase cheating frequencies. For high fines joint liability will never backfire and can reduce the rate of cheating if the fine is appropriately shared. Figure \ref{fig: cheatingprob} visualizes the Theorem by dividing the parameter space into different regions according to the resulting equilibrium cheating probabilities.   

The intuition for our result can be nicely deduced from the figure. When contestants bear full liability for fines for detected cheating $\eta=1$ they do not cheat. Both managers will cheat, since they have nothing to fear even if the team is caught cheating. Therefore, we obtain an equilibrium cheating probability of $1/2$. Moving some liability from the contestant to the manager reduces the cheating incentive for the manager and increases cheating incentives for the contestants at the same time. If the fine is relatively low, then at least one of the contestants finds it profitable to cheat before the higher liability put on the managers prevents them from cheating. As a consequence, the aggregate cheating probability increases. In contrast, a high fine leads the manager to stop cheating before the contestant starts to do so, which reduces the probability of equilibrium cheating.

\section{\label{sec:Experimental-Design4}Experimental design, hypotheses, and procedures}
Our experiments are designed to test the general results of our model and to answer two fundamental questions: First, can joint liability effectively reduce cheating rates in environments with strong deterrence, as indicated by theoretical predictions? Second, does, as theory suggests, joint liability backfire and lead to more cheating if fines are lenient?

\subsection{Design and hypotheses}
In Figure \ref{fig: cheatingprob}, we mark the location of our four main treatments with solid circles. Our additional treatments which serve the purpose of robustness checks (discussed in Section \ref{robustness}) are represented by hollow circles. In the model Section, we normalized the prize to unity. For the experiment, we set a prize of $V=90$ for all treatments. This is without loss of generality, as it just scales contest revenue, equilibrium efforts and does not alter $\hat{\eta}$ and $\hat{\mu}$ as long as the expected fine is scaled such that the ratio to the to the total prize remains the same.\footnote{More specifically, in the analysis above the expected fine $pf$ is to be replaced by half the expected fine per dollar of the prize $pf/2V$.} Furthermore, we set the cheating efficiency $\delta$ equal to two and the fraction of revenue received by the contestant $r$ to $2/3$. The detection probability is chosen to be 25 percent.

In our initial $2\times2$ design, we vary the cheating incentives by using two different fine levels $f_L=55$ and $f_H=65$. This implies relative expected fines per unit of the prize of $11/72$ and $13/72$, respectively.
In the other dimension, we vary the liability share $\eta$. In the individual-liability regime $Ind$, the fine for caught cheating is borne by the contestant alone ($\eta=1$). Under joint liability $Jo$, the manager and the contestant share not only the revenue from the contest, but also the fine if cheating is detected. We set the proportion of the fine borne by the contestant equal to the fraction of revenue that she receives ($\eta=r=2/3$). This results in four treatments:
\textit{Jo\_H}, \textit{Jo\_L}, \textit{Ind\_H}, and \textit{Ind\_L}.

The parameters of the four treatments are chosen such that moving from individual liability to joint liability, in theory, should have a different impact on aggregate cheating rates for the two fine levels. For both fine levels under individual liability in equilibrium the managers always cheat, while the contestants do not. Therefore, the resulting average cheating probability is 50 percent under both individual-liability treatments. For a high fine, joint liability should reduce the cheating probability to 25 percent, as the managers do not cheat anymore, while on average only one of the two contestants cheats.\footnote{There are multiple equilibria that yield the same average cheating probability but differ in the equilibrium strategies of the contestants. Either one of the two contestants cheats with certainty, or both randomize with equal probability.} Under the low fine, theory predicts joint liability to backfire. Managers are still cheating, even after they are liable for a share of the fine, while on average in equilibrium one of the two contestants cheats. The resulting equilibrium cheating probability increases from 50 to 75 percent. We summarise the theoretical predictions in Table \ref{tab:Asummaryofparameters}.

\begin{table}[H]
\centering

\begin{tabular}{lcccc}
\hline\hline 
Predicted cheating rate & \textit{Jo\_H} & \textit{Jo\_L} & \textit{Ind\_H} & \textit{Ind\_L} \\
\hline 
Avg. contestant cheating rate & 50\% & 50\% & 0\% & 0\% \\
Manager cheating rate & 0\% & 100\% & 100\% & 100\% \\
Avg. overall cheating rate $\bar{p}$ & 25\% & 75\% & 50\% & 50\% \\
\hline\hline
\end{tabular}

\caption{Predicted Cheating Behaviour across Treatments\label{tab:Asummaryofparameters}}
\end{table}

From the theory we derive the following directional hypotheses. 

\begin{hypothesis}
\label{concheateqm} The average cheating probabilities satisfy the following orderings: 
\begin{enumerate}
\item No impact of the fine with individual liability: $\bar{p}$(\textit{Ind\_H)}$=\bar{p}$(\textit{Ind\_L)}
\item Joint liability reduces cheating for high fines: $\bar{p}$(\textit{Jo\_H)}$<\bar{p}$(\textit{Ind\_H)}
\item Joint liability backfires for low fines: $\bar{p}$(\textit{Jo\_L)}$>\bar{p}$(\textit{Ind\_L)}.
\end{enumerate}
\end{hypothesis}

Moving from the aggregate cheating probability to individual behavior by role and liability scheme, we put forward two role-specific hypotheses arising from the equilibrium predictions.

\begin{hypothesis}
\label{conind}Under the individual liability scheme (\textit{Ind\_L} and
\textit{Ind\_H}), contestants never cheat while managers always do. 
\end{hypothesis}

\begin{hypothesis}
\label{conjo}Under joint liability, managers cheat
in the treatment \textit{Jo\_L} but not in the \textit{Jo\_H} treatment, while the size of the fine does not affect the cheating behavior of the contestants.
\end{hypothesis}

Finally, we turn to efforts. In general, designers of contest environments are interested in high efforts.
From Equation (\ref{eq:estar}) we know that efforts are higher in equilibrium if the contests are even because no team has the edge by cheating unilaterally. In the pure-strategy equilibria of the game, the probability of matching cheating decisions is identical at one-half across all treatments.\footnote{Recall that the team cheating probabilities are $50\%$ for both teams in the individual liability treatments, which yields probabilities of $25\%$ for both teams cheating as well as for both teams not cheating. In \textit{Jo\_L}, one team cheats with certainty, while the other cheats with probability $50\%$, which yields a probability of both teams cheating of $50\%$, while both teams never end up not cheating. Similarly, in \textit{Jo\_H} the probability of both teams not cheating is $50\%$, while it never happens that both cheat.} 
However, for the joint-liability treatments there exists also a mixed-strategy equilibrium, where both teams have identical cheating probabilities ($75\%$ in \textit{Jo\_L} and $25\%$ in \textit{Jo\_H}), which both lead to a higher probability of matching cheating decisions at $5/8$. Hence, depending on which equilibrium is selected, efforts are expected to be either identical or greater in the joint liability treatments.

\begin{hypothesis}
\label{coneffort} Efforts are weakly greater in the joint-liability treatments.
\end{hypothesis}

\subsection{Procedures}
The experiment took place at the Adelaide Laboratory for
Experimental Economics (AdLab) at the University of Adelaide. For
each treatment, we ran three sessions. All 12 sessions were programmed
and implemented using Z-tree (\citealp{fischbacher2007z}). 248 subjects
were recruited using the online recruitment system ORSEE (\citealp{greiner2015subject}).
Each subject participated in one session only. 

Upon arrival, the participants were randomly assigned to a computer. The instructions were distributed and read aloud by the experimenter. To ensure the understanding of the decision tasks, subjects were required to correctly answer control questions before starting the actual experiment. In each session, subjects were placed into teams of two. We randomly assigned roles (one contestant and one manager per team), which remained the same throughout the experiment. Two teams were randomly matched to compete against each other for 20 consecutive rounds. In the first stage, both team members chose their individual cheating intentions. One of the intentions was randomly selected and implemented as the team's final cheating decision for the round.
Subsequently, the contestants and managers were informed of the four individual cheating intentions and the two implemented decisions. In the second stage, the contestants decided their effort level. At the end of each round, all subjects received feedback if their team was caught cheating and were informed of the efforts and resulting prize shares for both teams. Furthermore, the two team members were informed about each other's payoffs. All earnings were expressed in Experimental Currency Units, which we converted into Australian dollars at the end of the experiments at an exchange rate of one dollar per every ten ECUs. A session lasted about one hour and 20 minutes, and participants earned 26.23 AUD on average. Earnings included a show-up fee of 5 dollars. 

\section{\label{sec:Results4}Results}
 In this section, we present the main results. We start with descriptive summary
statistics on the prevalence of cheating. After that, we investigate
the aggregate dynamics of cheating and then use regression analysis
to gain insight into the drivers behind the aggregates and to conduct statistical tests. Finally, we take a closer look at the impact of treatments on effort exertion and competitiveness.

\subsection{Aggregate cheating rates }
Table \ref{tab:Average-and-standard} reports the average proportion
of cheating decisions by treatment and role. We find that in the three treatments \textit{Ind\_H}, \textit{Ind\_L}, and \textit{Jo\_L} slightly more than $60\%$ of all decisions involve cheating. Notably, in the \textit{Jo\_H} treatment, this rate is lower at $53\%$. Hence, for high fines, joint liability (\textit{Jo\_H}) seems to outperform individual liability (\textit{Ind\_H}). However, the difference between the two regimes becomes negligible when fines are low, indicating that joint liability does not backfire with low fines as predicted by theory. 

Next, we break down the overall cheating rate into cheating decisions made by contestants and managers. Among contestants, we observe higher cheating rates in the \textit{Jo\_L} ($62.3\%$) and \textit{Jo\_H} ($50\%$) treatments compared to the \textit{Ind\_L} ($44.6\%$) and \textit{Ind\_H} ($41.7\%$) treatments. This observation aligns with theoretical predictions, indicating that joint liability creates greater cheating incentives for contestants. Managers respond to joint liability with decreased cheating. The cheating rate among managers in the \textit{Jo\_L} (62\%) and \textit{Jo\_H} (56\%) treatments is lower compared to over 80\% in the \textit{Ind\_L} and \textit{Ind\_H} treatments. The observation that the reduction occurs for both fine levels is contradicting the theory that predicts only a reduction when fines are high.

For a first summary, we find that the observed behavior does not conform to crisp equilibrium play. However, the data exhibit consistent patterns that partially agree with the comparative statics of the
theory. Contestants exhibit the expected directional changes in their cheating likelihood between different punishment regimes. The managers' behavior is less consistent with theory. The theory suggests that for subgame-perfect efforts, managers have dominant strategies to cheat in the \textit{Jo\_L} treatment and not to cheat in \textit{Jo\_H}. Yet, in the experiment, the observed difference in managers' cheating rates between these two treatments is small, differing by only 0.06 percentage points. Managers tend to reduce cheating irrespective of the fine size when they have to bear liability. 

\begin{table}[H]
\centering
\begin{tabular}{ m{2cm} >{\centering\arraybackslash}m{1.5cm} >{\centering\arraybackslash}m{2cm} >{\centering\arraybackslash}m{2cm} >{\centering\arraybackslash}m{2cm} }
\hline
\hline
Treatment & \textit{N} & \begin{tabular}{@{}c@{}}Overall\\Cheating\\Rate\end{tabular} & \begin{tabular}{@{}c@{}}Contestant\\Cheating\\Rate\end{tabular} & \begin{tabular}{@{}c@{}}Manager\\Cheating\\Rate\end{tabular} \\
\hline
\textit{Jo\_H} & 1200 & \begin{tabular}{@{}c@{}}0.530\\(0.499)\end{tabular} & \begin{tabular}{@{}c@{}}0.500\\(0.500)\end{tabular} & \begin{tabular}{@{}c@{}}0.560\\(0.497)\end{tabular} \\
\textit{Jo\_L} & 1120 & \begin{tabular}{@{}c@{}}0.621\\(0.485)\end{tabular} & \begin{tabular}{@{}c@{}}0.623\\(0.485)\end{tabular} & \begin{tabular}{@{}c@{}}0.620\\(0.486)\end{tabular} \\
\textit{Ind\_H} & 1280 & \begin{tabular}{@{}c@{}}0.610\\(0.488)\end{tabular} & \begin{tabular}{@{}c@{}}0.417\\(0.493)\end{tabular} & \begin{tabular}{@{}c@{}}0.803\\(0.398)\end{tabular} \\
\textit{Ind\_L} & 1360 & \begin{tabular}{@{}c@{}}0.624\\(0.485)\end{tabular} & \begin{tabular}{@{}c@{}}0.446\\(0.497)\end{tabular} & \begin{tabular}{@{}c@{}}0.801\\(0.399)\end{tabular} \\
\hline
\end{tabular}
\caption{Average cheating rates by Treatment with standard deviations in parentheses.  \label{tab:Average-and-standard}}
\end{table}

\subsection{Cheating dynamics} \label{cheating dynamics}
The analysis above, is based on averages of groups of four over 20 periods. Next we investigate how behavior develops over time. In the strategically rich and complex environment, we expect participants to change their behavioor over time as they learn and adapt their behavior to their teammate's and the opposition team's behavior. To explore how cheating behavior evolves over time, we analyze the average cheating rate across four treatments, as depicted in Figure \ref{fig:Proportion-of-cheating}.

The left panel of the figure displays the average cheating rate per period. As expected, there is a large variation between periods. The \textit{Jo\_L} treatment consistently records the highest cheating rates whereas the \textit{Jo\_H} treatment maintains the lowest cheating rates. 


In the right panel, we eliminate the period-to-period variation by aggregating five periods. This reveals that the \textit{Jo\_L}, \textit{Ind\_L} and \textit{Ind\_H} treatments yield similar overall cheating rates over all four quarters, while the \textit{Jo\_H} treatment exhibits a lower cheating rate for all four quarters. Furthermore, we observe a downward trend of the average cheating probability in the \textit{Jo\_H} treatment, whereas other three treatments do not exhibit a time trend.

\begin{figure}[H]
\begin{centering}
\includegraphics[scale=0.35]{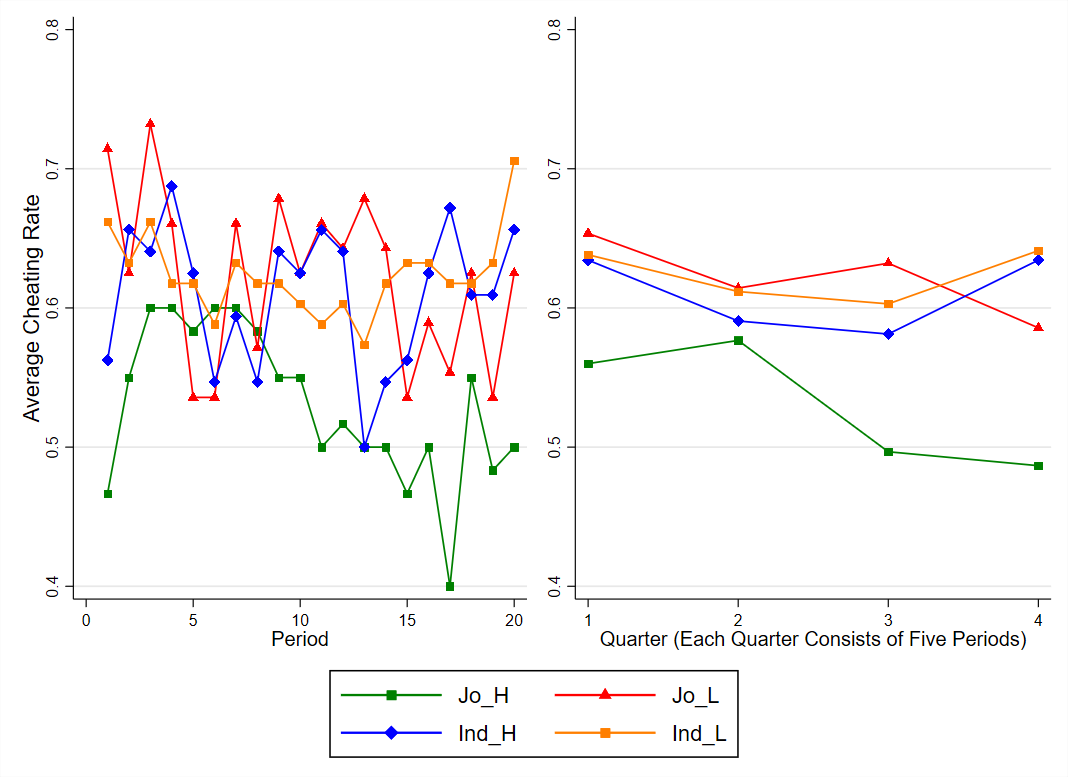}
\par\end{centering}
\caption{Average Cheating Rate over Time by Treatment\label{fig:Proportion-of-cheating}}
\end{figure}

\subsection{Cheating patterns within teams} 

In each period, there are four combinations of cheating decisions a team can make. Either no one in a team cheats, both cheat, or only the manager or the contestant cheats. A plot of relative frequencies of the four combinations (gray bars in Figure \ref{fig:Histogram-of-types}) provides information on the treatment-specific intra-team cheating dynamics. The equilibrium fractions are depicted as transparent boxes without fill.

\begin{figure}[h!]
\begin{centering}
\includegraphics[scale=0.25]{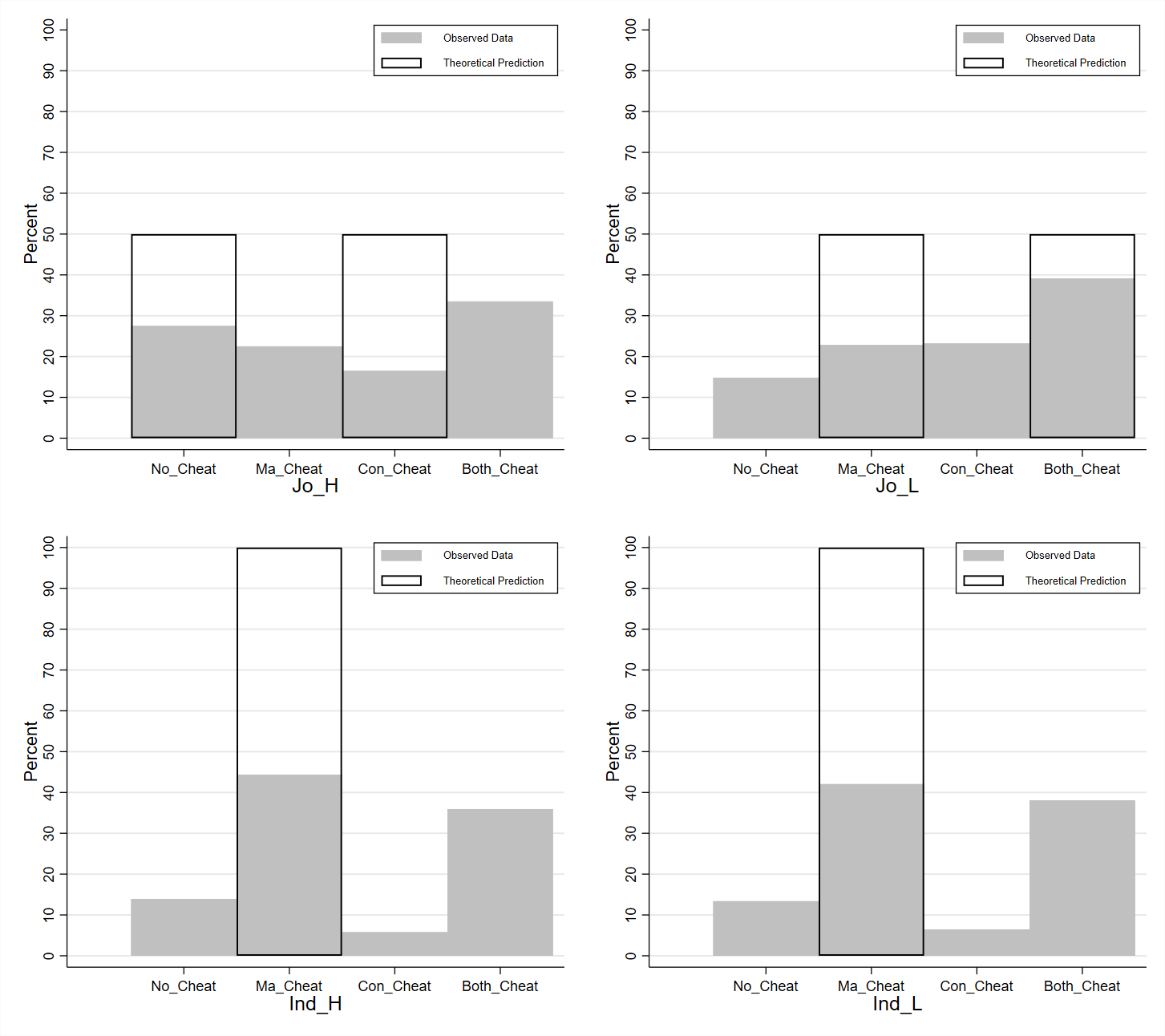}
\par\end{centering}
\caption{Distribution of Team Types by Treatment \label{fig:Histogram-of-types}}
\end{figure}

Play in all four treatments differs markedly from equilibrium behavior. Moreover, there is a strong over-all impact of the treatments on behaviour ($p=0.008$ $\chi^2$ test). In the individual liability treatments (bottom row) we often see both players instead of, as predicted, only the manager cheating. Managers have a dominant strategy to cheat, but in both treatments about 20 percent don't play the dominant strategy. In the joint liability treatments (top row), we see that the combinations which we should never observe (only the manager or both cheat with a high fine; nobody or only the contestant cheats with a low fine) consistently occur. 

Despite the clear deviations from theoretical predictions, we see that incentives matter, since the comparative statics with respect to changes in the fine rate for a given liability scheme are consistent with theory. Under individual liability, theory predicts that a change in the fine rate does not change cheating behavior. This is visually confirmed. The distributions do not appear to be different. Under joint liability, moving from a low fine (top left panel) to a low fine (top right panel), we see that the distributions differ considerably. In particular, we see a shift from the combinations in which no one cheats to those in which both players cheat. 

Comparing the cheating profiles under joint liability with those under individual liability reveals clear differences. Under both fine regimes, as expected, joint liability reduces the instances of only the manager cheating, while the situation where only the contestant cheats arises more often.

\subsection{Treatment effect on cheating}

In what follows, we formalize the analysis by using regressions. We employ a random-effects multinomial logit with a team's cheating choice as the dependent variable. Treatment dummies are added as independent variables to estimate treatment effects\footnote{Full tables with estimated coefficients are provided in the appendix.}. To address the potential correlation of cheating decisions within groups (i.e., the two teams matched for the whole experiment) over time, we allow for clustering of the error term at the group level. As discussed in subsection \ref{cheating dynamics}, early-period choices tend to be noisy. Therefore, we run the regression model separately for the first ten and the second ten periods. In the second half of the experiments, participants are  more familiar with the game and the strategies of the opposing team, which leads to behavior settling down and noise being reduced.
\begin{figure}[H]
\begin{centering}
\includegraphics[scale=0.35]{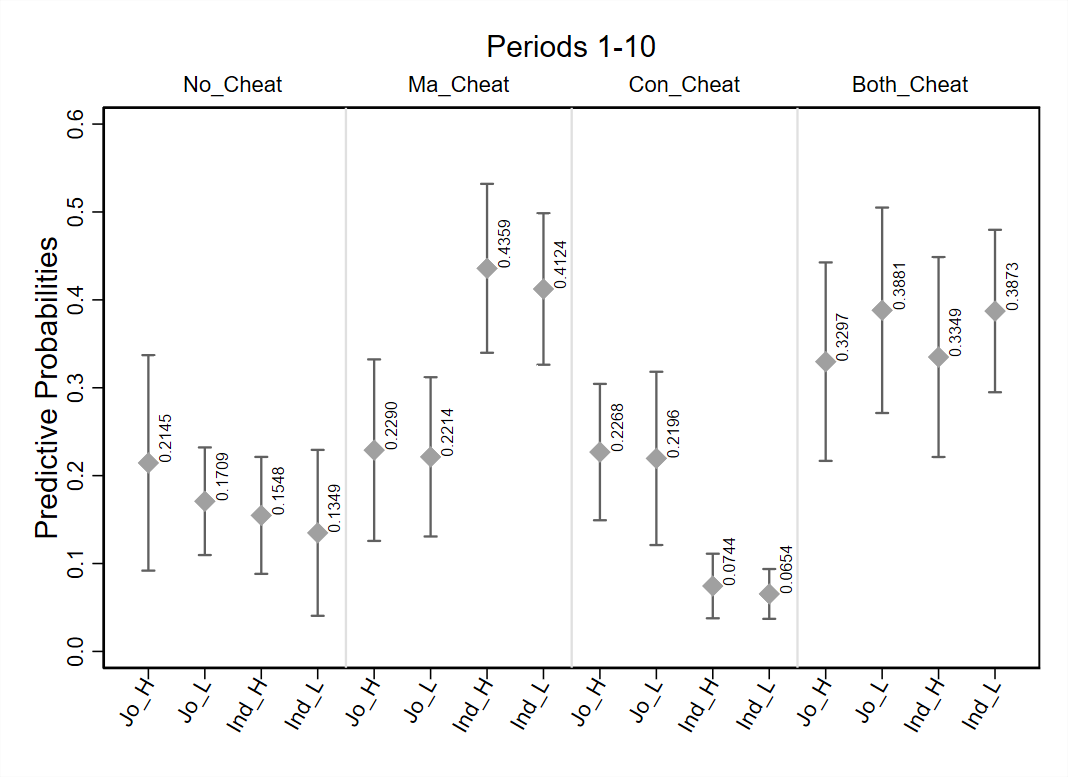}
\par\end{centering}
\caption{Predictive Probabilities of Team Type by Treatment (First Half) \label{fig:xtmlogitfirsthalf}}
\end{figure}

Figure \ref{fig:xtmlogitfirsthalf} plots the predicted probabilities of observing the different cheating profiles within a team by treatment for the first 10 periods. The graph comprises four panels, each representing a specific cheating profile with point-estimates and confidence bands. In the first ten periods there are no significant treatment differences for the probabilities of no one or both team members cheating. As expected, the probability of teams where only the manager cheats is significantly higher under individual liability than under joint liability ($p<0.01$ for all pairs of the \textit{Jo} and \textit{Ind} treatments).

The greater likelihood of just the manager cheating under individual liability comes at the expense of the contestant cheating alone The predicted probability that only the contestant cheats is significantly higher under joint liability ($p<0.01$ for all combinations between the \textit{Jo} and \textit{Ind} treatments). Consequently, in the first 10 periods, the introduction of joint liability has the main effect that the likelihood of observing only the contestant cheating increases by the same magnitude as the likelihood that only the manager cheats decreases, which leaves the overall cheating probability unchanged. Hence, regardless of the size of the fine, we do not find a treatment effect of introducing joint liability on team cheating rates in early rounds.

\begin{figure}[H]
\begin{centering}
\includegraphics[scale=0.35]{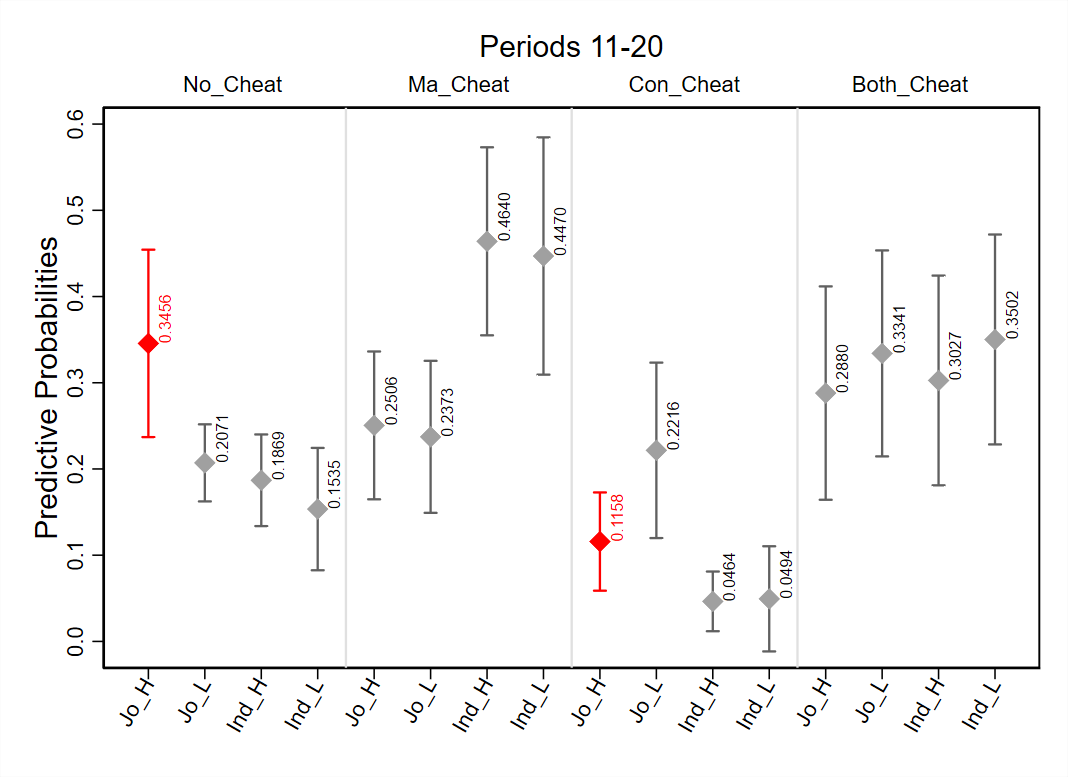}
\par\end{centering}
\caption{Predicted Probabilities of Team Type by Treatment (Second Half) \label{fig:xtmlogitsecondhalf}}
\end{figure}

Figure \ref{fig:xtmlogitsecondhalf} shows that once behavior has settled (in periods 11 to 20), a treatment effect appears.  Surprisingly, the \textit{Jo\L} treatment is the only treatment where behavior changes substantially over time. Comparing the results between earlier and later rounds in this treatment reveals one behavioral shift that results in joint liability reducing cheating if fines are high. In the \textit{Jo\_H} treatment, we observe a higher probability that both team members refrain from cheating. Teams under the \textit{Jo\_H} treatment exhibit a $34.67\%$ predicted probability of landing in this  category, which is higher than in the initial rounds and also significantly higher than in the other three treatments in the later rounds ($p = 0.009$ for \textit{Ind\_H} vs. \textit{Jo\_H}, and $p = 0.003$ for \textit{Ind\_L} vs. \textit{Jo\_H} and $p = 0.018$ for \textit{Jo\_L} vs. \textit{Jo\_H}). The increased likelihood of no one cheating under joint liability with a high fine in later rounds comes at the expense of only the contestant cheating (declining from $22.68\%$ in the first half to $11.58\%$ in the second half). Hence, the shift that led to a treatment effect originates from fewer contestants cheating when the manager does not cheat either.

Since cheating rates were already similar in the low-fine treatments in the early rounds and there were no changes over time, we do not find a treatment effect on over-all cheating between the two liability regimes when fines are low. The theoretically predicted backfiring of joint liability under low fines did not occur. Average predicted cheating probabilities are ordered by treatment as follows: $\bar{p}$(\textit{Jo\_L})$=\bar{p}$(\textit{Ind\_L})$=\bar{p}$(\textit{Ind\_H})$>\bar{p}$(\textit{Jo\_H}).

Using the regression results we can estimate a composite cheating probability for each treatment and test differences for significance.\footnote{The expected cheating probability is the predicted probability that both cheat plus half the predicted probability that only the manager cheats plus half the probability that only the contestant cheats.} In the second half of the experiments the estimated difference between the cheating probabilities in the joint and individual treatments with high fines is -0.1092, which is marginally significantly smaller than zero ($p=0.073$). In contrast there is no significant difference between the predicted cheating probabilities in the treatments with low fines (-0.0027, $p = 0.964$). 

\newtheorem{remark}{Result}

\begin{remark}
\emph{Joint liability outperforms individual liability in deterring cheating behavior if the fine is high, as it encourages both team members to refrain from cheating. However, with a low fine, joint liability does not cause different cheating rates compared to individual liability. 
\label{fig:mlogitfirsthalf}}
\end{remark}

\begin{remark}
\emph{The observed superior performance of the \textit{Jo\_H} treatment in deterring overall cheating can be attributed to contestants reducing cheating, particularly in the later periods of the experiment.} 
\end{remark}


Considering teams involving cheating managers (i.e. both or only the manager cheats), we do not find any differences across the joint liability treatments or across the individual-liability treatments. This confirms our earlier observation that managers do not adjust their cheating behavior in response to changes in fine severity. This is expected for the individual liability treatments but not for the joint liability treatments. While the fine should not impact the managers cheating decision the liability regime should. The predicted probability of only the manager cheating is consistently and significantly lower under joint liability than under individual liability ($p=0.002$ for \textit{Jo\_H} versus \textit{Ind\_H},  and $p=0.011$ for \textit{Jo\_L} versus \textit{Ind\_L} in the later periods). 

\begin{remark}
\emph{Managers cheat less under joint liability regardless of the severity of the fine.} 
\end{remark}

\subsection{Aggregate effort and its dynamics}

Moving to the second stage of the game, we analyze the effort decisions by contestants.
Table \ref{tab:Average-and-standard-efforts} reports the average
efforts and the degree to which these efforts deviate from the theoretically optimal level. To calculate the optimal effort, we use the realized cheating decisions of both teams (13.33 if only one team cheats and 15 if neither or both teams cheat). The deviation from optimality is then calculated by taking the difference between the actual effort and the subgame perfect continuation effort. A positive deviation implies over-exertion of effort, while a negative value implies under-exertion. 

 The average efforts are similar in the four treatments with a minimum of 16.50 in the \textit{Jo\_L} treatment and a maximum of 18.87 in the \textit{Ind\_H} treatment. These averages are higher than the optimal levels. Additionally, in all treatments, the mean values of the deviation from optimality are positive, confirming a consistent trend of over-exertion of efforts in the contests. This finding aligns with the existing contest literature \citealp{chaudhuri2011sustaining,choi2007coevolution,sheremeta2018behavior}.

\begin{table}[H]
\centering
\begin{tabular}{ m{2cm} >{\centering\arraybackslash}m{1.5cm} >{\centering\arraybackslash}m{2cm} >{\centering\arraybackslash}m{3cm} }
\hline
\hline
Treatment & \textit{N} & \begin{tabular}{@{}c@{}}Average\\Effort\end{tabular} & \begin{tabular}{@{}c@{}}Average Deviation\\from Optimally\end{tabular} \\
\hline
\textit{Jo\_H} & 600 & \begin{tabular}{@{}c@{}}18.8700\\(15.5164)\end{tabular} & \begin{tabular}{@{}c@{}}4.5367\\(15.5178)\end{tabular} \\
\textit{Jo\_L} & 560 & \begin{tabular}{@{}c@{}}16.5000\\(7.9531)\end{tabular} & \begin{tabular}{@{}c@{}}2.1964\\(7.9346)\end{tabular} \\
\textit{Ind\_H} & 640 & \begin{tabular}{@{}c@{}}18.8797\\(13.9484)\end{tabular} & \begin{tabular}{@{}c@{}}4.6089\\(13.8229)\end{tabular} \\
\textit{Ind\_L} & 680 & \begin{tabular}{@{}c@{}}17.7941\\(13.1882)\end{tabular} & \begin{tabular}{@{}c@{}}3.5049\\(13.1840)\end{tabular} \\
\hline
\end{tabular}
\caption{Average Effort and Over-dissipation
\label{tab:Average-and-standard-efforts}}
\end{table}

Figure \ref{fig:Average-effort-by} depicts the average effort by periods (left panel) and the average by quarters (right panel). In the first period, the average efforts start at about 22 to 27 and decrease sharply in the first quarter. The extremely high efforts in early periods, is likely do to confusion among some subjects in the experiment's early stages. Once they receive payoff feedback and realize that over-exertion leads to low or even negative profits, they swiftly adjust their efforts. From the second quarter onward, the effort trends upwards again in the \textit{Jo\_H} and \textit{Ind\_H} treatments, while it declines further in the \textit{Jo\_L} treatment.   

\begin{figure}[H]
\begin{centering}
\includegraphics[scale=0.35]{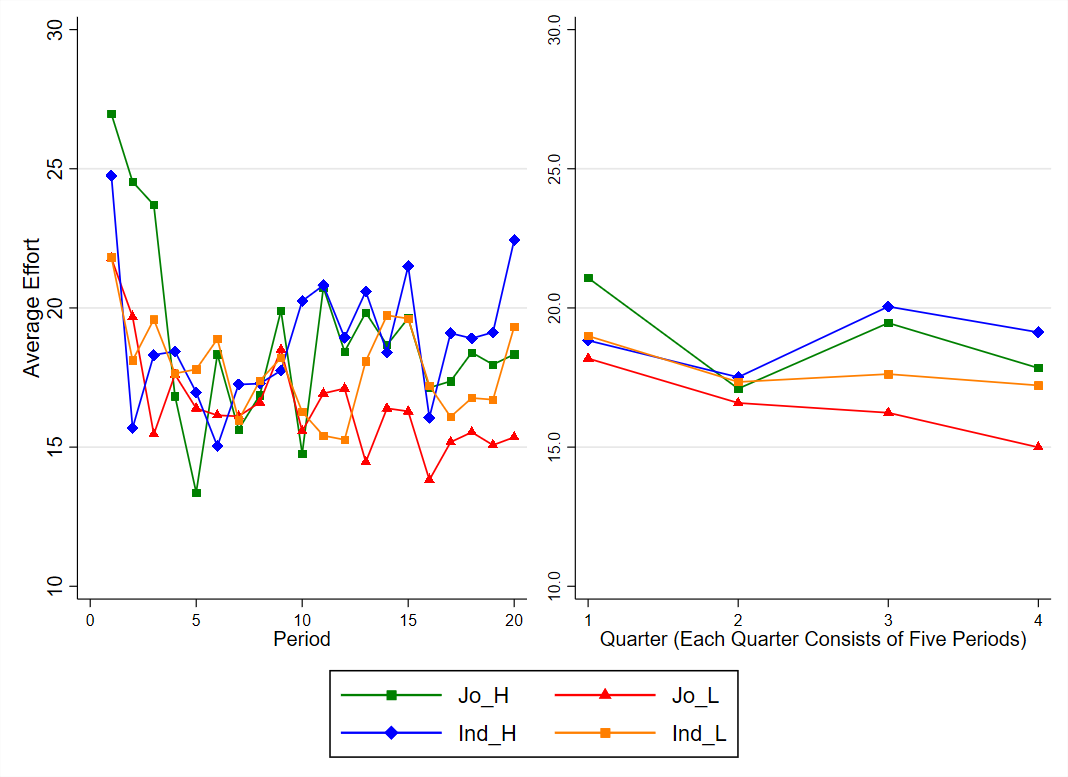}
\par\end{centering}
\caption{Average Effort over Time by Treatment \label{fig:Average-effort-by}}
\end{figure}

\subsection{Treatment effect on over-dissipation}

It is important to note that achieving the equilibrium effort requires both contestants to best respond to their rival's effort. However, if one contestant persistently deviates from the best response level, perhaps due to reasons such as confusion or a desire to win regardless of costs, then this disrupts the equilibrium. Consequently, the other contestant's best response is affected, impeding convergence towards equilibrium. Over the course of 20 periods of repeated contests with fixed pairs of contestants, we examine whether contestants adjust their efforts based on their competitors' effort levels. Figure \ref{fig:Effort-versus-best} illustrates contestants' actual effort plotted against their best response to their competitor's effort. The determination of best response effort involves substituting the opponent's actual effort and the implemented cheating decisions of both teams into the derived best response equation \ref{eq:breff}. A quadratic fitting line illustrates the observed relationship between these two variables.

\begin{figure}[ht]
\centering
\includegraphics[scale=0.23]{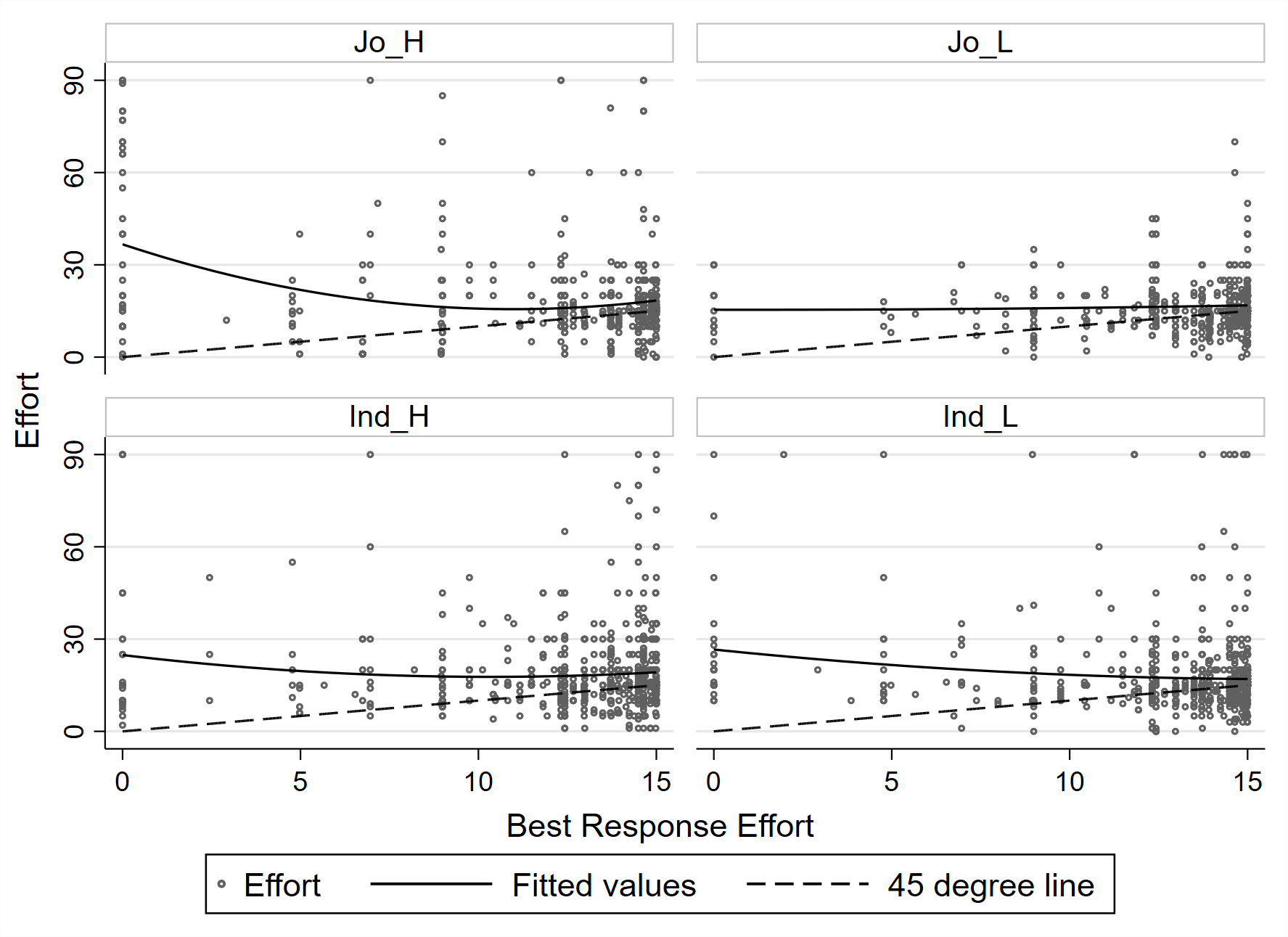}
\caption{Observed Effort versus Best Response Effort by Treatment \label{fig:Effort-versus-best}}
\end{figure}

In all treatments, the quadratic prediction consistently lies above the 45-degree line, indicating persistent over-exertion. As the best response effort approaches zero, the deviation between one's actual effort and the best response level widens. This suggests that when an opponent displays irrational over-competitiveness by exerting exceptionally high effort, the contestant from the other team tends to respond with an effort that exceeds the best response effort by far. This contradicts the prediction, where an extremely high effort by one party should prompt minimal effort by the other. We observe repeated instances of these hyper-competitive periods between rivals which can be interpreted as ongoing feuds.

The severity of over-exertion lessens when the effort of the other team's contestant lies in the equilibrium range of 13 to 15. This pattern is depicted by the reduced gap between the 45-degree line and the quadratic fitting line. The majority of observations lie in this area.

Next we search for factors that contribute to the over-dissipation. Given the interdependence between the contestants' effort decisions, we conduct a group-level analysis and derive the average over-dissipation for each contest. The average over-dissipation is calculated as the difference between the average effort exerted by paired contestants and the equilibrium effort. We then conduct a random-effects regression on the average over-dissipation. The regression model incorporates treatment dummies, the total count of cheating teams within each contest, and quarterly time dummies, with clustering on the group level to account for intra-group correlations over time. In addition to analyzing the first and second halves of the experiment separately, we include a model for the full experiment and add quarter dummies.

The estimated coefficients of these three models are presented in Table \ref{tab:random-effect-over-dissipation}. In the full model, we find an obvious time trend, indicating the strongest over-dissipation takes place in the initial quarter, with significantly reduced levels observed in the subsequent quarters ($p=0.009$ for the second quarter, and $p=0.090$ for the last quarter.) Moreover, the level of over-dissipation  is significantly different from zero for three treatments at the 5\% (\textit{Jo\_L}, \textit{Ind\_H}, and \textit{Ind\_L}, with $p$-values of 0.041, 0.009, and 0.013) and significant at the 10-percent level for the \textit{Jo\_H} treatment ($p$-value = 0.066). However, we do not observe significant differences in over-dissipation across treatments ($p>0.1$ for all treatment pairs). 

 We also find that over-exertion is largest when both teams cheat ($p=0.037$ for both team cheating versus no team cheating groups; $p=0.077$ for both team cheating versus one team cheating groups). The observed effort choices in the split regressions is roughly consistent with the regression for the full experiment. The strong effect of both teams cheating on over-exertion remains.

\begin{remark}
\emph{Over-exertion of efforts is strongest if both teams cheat and does not significantly differ between treatments} 
\end{remark}

\begin{table}[H]
\centering
\begin{tabular}{ m{4.5cm} >{\centering\arraybackslash}m{2.5cm} >{\centering\arraybackslash}m{2.5cm} >{\centering\arraybackslash}m{2.5cm} }
\hline
\hline
Over-dissipation & \begin{tabular}{@{}c@{}}Full\\Model\end{tabular} & \begin{tabular}{@{}c@{}}First-Half\\Model\end{tabular} & \begin{tabular}{@{}c@{}}Second-Half\\Model\end{tabular}\\
\hline
Treatment (Base:\textit{Jo\_H}) & & & \\ 
\textit{Jo\_L} & \begin{tabular}{@{}c@{}}-2.5636\\(2.7477)\end{tabular} & \begin{tabular}{@{}c@{}}-1.7376\\(2.4729)\end{tabular} & \begin{tabular}{@{}c@{}}-3.2986\\(3.2650)\end{tabular} \\
\textit{Ind\_H} & \begin{tabular}{@{}c@{}}-0.0957\\(3.0868)\end{tabular} & \begin{tabular}{@{}c@{}}-0.8697\\(2.7297)\end{tabular} & \begin{tabular}{@{}c@{}}0.7104\\(3.8386)\end{tabular} \\
\textit{Ind\_L} & \begin{tabular}{@{}c@{}}-1.2085\\(2.8927)\end{tabular} & \begin{tabular}{@{}c@{}}-0.9665\\(2.7588)\end{tabular} & \begin{tabular}{@{}c@{}}-1.4156\\(3.6275)\end{tabular} \\
Number of Cheating Teams (Base: No Team Cheats) & & & \\ 
\textit{One Team Cheats} & \begin{tabular}{@{}c@{}}1.0025\\(1.0480)\end{tabular} & \begin{tabular}{@{}c@{}}0.4817\\(1.3898)\end{tabular} & \begin{tabular}{@{}c@{}}1.3010\\(0.9532)\end{tabular} \\
\textit{Both Teams Cheat} & \begin{tabular}{@{}c@{}}2.1571**\\(1.0337)\end{tabular} & \begin{tabular}{@{}c@{}}1.4220\\(1.5725)\end{tabular} & \begin{tabular}{@{}c@{}}1.9316**\\(0.9160)\end{tabular} \\
Quarters (Base: Q1; Q3) & & & \\ 
\textit{Q2} & \begin{tabular}{@{}c@{}}-1.9786***\\(0.7623)\end{tabular} & \begin{tabular}{@{}c@{}}-2.0260***\\(0.7855)\end{tabular} & \\
\textit{Q3} & \begin{tabular}{@{}c@{}}-0.8625\\(1.1761)\end{tabular} & & \\
\textit{Q4} & \begin{tabular}{@{}c@{}}-1.8931*\\(1.1171)\end{tabular} & & \begin{tabular}{@{}c@{}}-1.0342*\\(0.5476)\end{tabular} \\
Intercept & \begin{tabular}{@{}c@{}}4.6002*\\(2.7319)\end{tabular} & \begin{tabular}{@{}c@{}}5.0872*\\(2.8423)\end{tabular} &  \begin{tabular}{@{}c@{}}3.7241\\(3.0030)\end{tabular} \\
\hline
\multicolumn{4}{l}{{\small{}{*}{*}{*} Sig. at the 1 percent level, {*}{*}Sig. at the
5 percent level, {*}Sig. at the 10 percent level}}\\
\end{tabular}
\caption{Random-effects Regression on Over-dissipation, Clustering on the Group Level
\label{tab:random-effect-over-dissipation}}
\end{table}

\section{Is the Absence of Backfiring Robust}\label{robustness}
Our experimental findings challenge the theoretically expected counterproductive effect of joint liability on cheating behavior when fines are lenient. This discrepancy could have arisen as the low fine of $55$ ECUs was not salient enough to entice the contestants to cheat once some of their liability was removed. As a robustness check, we ran two further treatments with a fine of $40$ ECUs \footnote{The decision to conduct additional treatments with the reduced fine was not part of the initial research plan and was in response to the observed results.}, referred to as \textit{Jo\_40} and \textit{Ind\_40}. 

This fine level was chosen to give joint liability the best chance to backfire. With a fine level of $40$ under joint liability all four players should choose to cheat in equilibrium, while under individual liability only the managers should cheat. Hence, the predicted equilibrium cheating rate should increase from $50\%$ to $100\%$ when joint liability is introduced.  See Figure \ref{fig: cheatingprob} for how the robustness treatments relate to the original treatments in the parameter space.

Our experimental data show a small aggregate difference with average cheating rates of $58.24\%$ and $63.24\%$ for the \textit{Ind\_40} and \textit{Jo\_40} treatments. Figure \ref{fig:robustcheatingtime} shows that this difference primarily arises from the earlier stages, particularly the first two quarters. In the \textit{Ind\_40} treatment, the cheating rate gradually increases over time. In contrast, the \textit{Jo\_40} treatment begins at a higher rate, decreases in the middle stages, and rises again by the final quarter. While the \textit{Jo\_40} treatment generally exhibits higher cheating rates, the difference between the two treatments diminishes over time, stabilizing at $1.76\%$ and $3.82\%$ in the final two quarters.
 
\begin{figure}[H]
\centering
\includegraphics[scale=0.3]{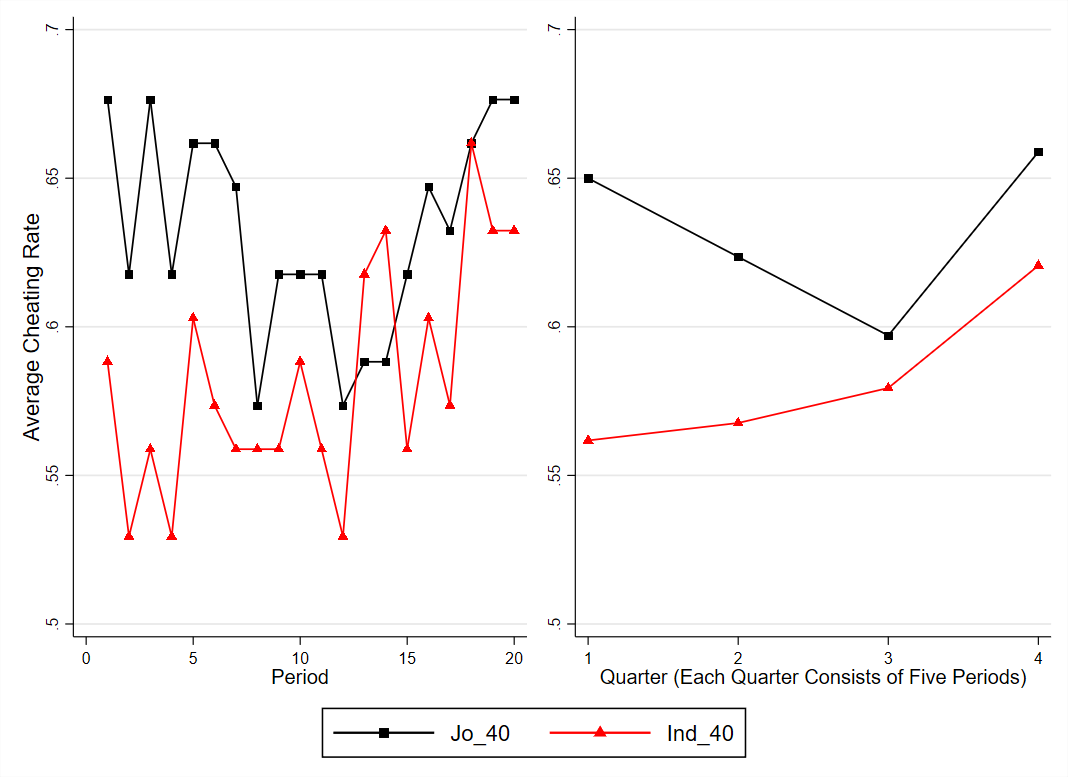}
\caption{Average Cheating Rate over Time by Additional Treatment
\label{fig:robustcheatingtime}}
\end{figure}

 Figure \ref{fig:robustcheatdistribution} illustrates team cheating decision distributions across the \textit{Jo\_40} and \textit{Ind\_40} treatments. We find a persistent influence of joint liability on cheating across different roles. Joint liability continues to discourage managers from engaging in cheating while simultaneously incentivizing increased contestant cheating. $74.41\%$ of managers cheat in the \textit{Jo\_40} treatment, while a higher proportion of $85.88\%$ engage in the \textit{Ind\_40} treatment. Conversely, $51.17\%$ of contestants cheat in the \textit{Jo\_40} treatment, while a lower proportion of $34.11\%$ partake in the \textit{Ind\_40} treatment. These findings challenge theoretical expectations once more, particularly in relation to the responsiveness of managers to the reduced fine of $40$.


Next we return to the frequency of observed combinations of intentions within teams. In the \textit{Jo\_40} treatment, both team members cheating is the modal combination of observed choices ($40\%$), while theory predicts it to be the only observed category. Similarly, in the \textit{Ind\_40} treatment, only the manager cheating is the most frequently observed ($52\%$) combination, without getting near to the theoretical rate of $100\%$. In summary, we can see a slight qualitative shift in the predicted direction.

\begin{figure}[H]
\centering
\includegraphics[scale=0.3]{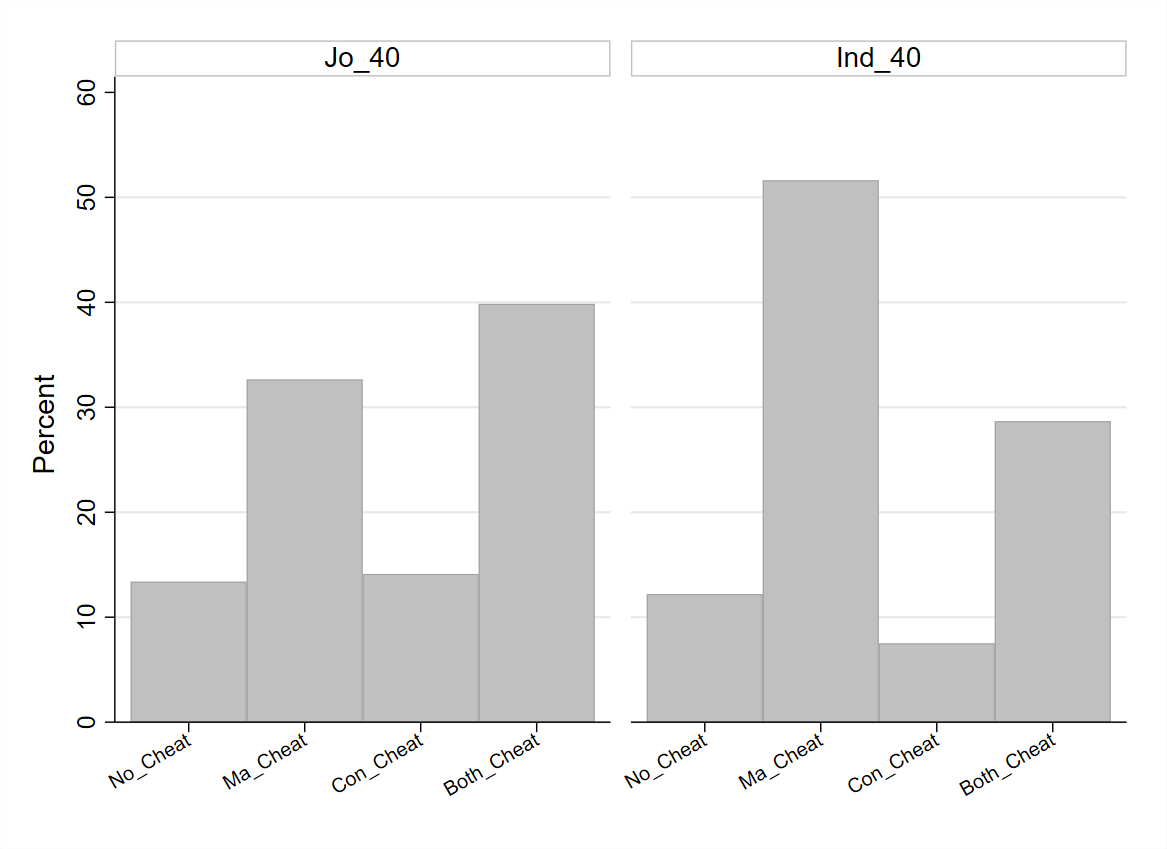}
\caption{Distributions of Team Type for Additional Treatments
\label{fig:robustcheatdistribution}}
\end{figure}

We again use a Multinomial Logit regression for the latter half of the experiment to test for treatment differences.  Figure \ref{fig:robustcheatreg} presents the predicted probabilities for combinations of cheating decisions within teams for the two additional treatments\footnote{Detailed regression results are provided in the appendix.}. The predicted probability for only the manager to cheat is significantly lower in the \textit{Ind\_40} treatment than in the \textit{Jo\_40} treatment ($p=0.005$) while there are no significant differences of the probabilities for the other categories. 

\begin{figure}[H]
\centering
\includegraphics[scale=0.3]{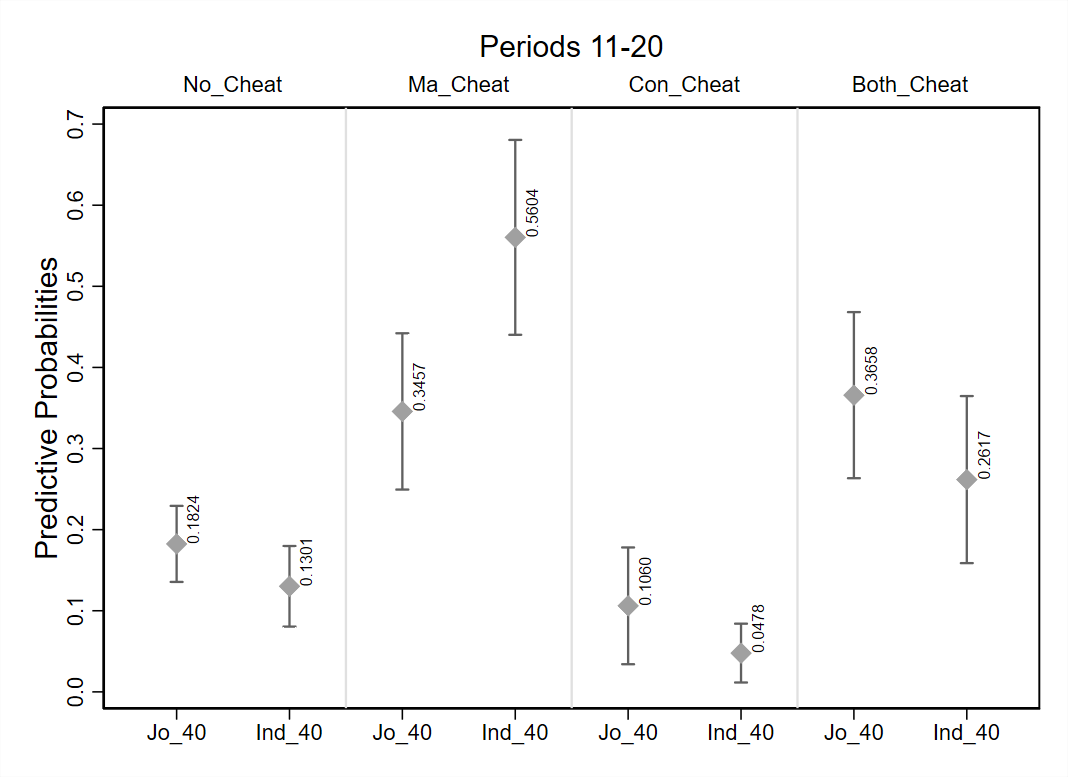}
\caption{Predictive Probabilities of Team Types for Additional Treatments
\label{fig:robustcheatreg}}
\end{figure}

Formal tests of the predicted cheating probability from the Multinomial Logit confirm the descriptive results from above. The small difference in behaviour does not translate to substantive differences of the predicted cheating rates. Initially, cheating rates tend to be slightly higher in the joint liability treatment. However, the difference for the first half of the experiment is not significant ($p=0.141$). The estimated albeit insignificant gap disappears in the second half of the experiment ($p=0.665$).

Effort patterns in the \textit{Jo\_40} and \textit{Ind\_40} treatments remain consistent with earlier findings, featuring over-dissipation, especially in the initial quarter. Contestants again exhibit irrationally high effort levels when faced with opponents exerting excessive effort. A random-effects regression on over-dissipation again does not find a treatment effect. \footnote{Detailed regression results are provided in the appendix.}

\section{\label{sec:Conclusion4}Conclusion}
In competitive settings, contestants often interact with non-contestants and work as a team. The problem of non-contesting team members to initiate cheating on behalf of the contestants without facing consequences is highly relevant in the real world but has received little attention in the literature. We investigated the underlying agency problem within a game-theoretical model. We were particularly interested in the effect of extending the liability for detected cheating to non-contestants. Our theoretical model suggests that joint liability can reduce cheating if deterrence is strong but might backfire when deterrence is weak. 

Given the secretive nature of cheating, empirical tests of our theoretical findings in the field are difficult. Hence, we designed a laboratory experiment to test the predictions of our model. While the laboratory lacks many important aspects of real-world cheating decisions, it is a good testbed for the incentive mechanism driving our theoretical results. 

The experimental results partly confirm our theoretical predictions. Under high fines, joint liability effectively deters cheating, as anticipated. However, contrary to theory, we do not find evidence of joint liability backfiring when fines are low. 

Moreover, joint liability under high fines has the additional benefit of potentially being able to reduce over-dissipation of rents in contests with cheating. Since over-exertion of effort is more pronounced when both teams engage in cheating, the ability of joint liability to reduce cheating can also limit excessive effort.

Our findings indicate that policies extending liability to non-contestants that have influence on their team's cheating decisions are not risky even if policy makers are not sure if the deterrence level is above the theoretical threshold that rules out backfiring. In sectors where cheating is commonplace and costly to detect, the introduction of shared liability can serve as a deterrent, curbing unethical behaviour without necessarily escalating sanctions to excessively high levels. By also holding those who influence but do not directly participate in the contest responsible, policy makers can potentially reduce the prevalence of cheating and at the same time distribute punishment more fairly. 

\clearpage

\appendix
\section{Proof of the Theorem}
\begin{proof}
Using Definition \ref{eq:crith} and substituting for $pf$ shows that  
\begin{equation*}
    pf>\frac{r(\delta-1)}{2(1+\delta)}\ \textrm{and}\ \eta=1 \Rightarrow \hat{\mu}<
    \frac{1}{2}.
\end{equation*}
Hence, according to Proposition \ref{propeqm} we have:
    \begin{equation*}
        pf>\frac{r(\delta-1)}{2(1+\delta)} \Rightarrow \bar{p}_h (1,pf)=\bar{p}_l (1,pf)=\frac{1}{2}.
    \end{equation*}
 Next we show that for average cheating probabilities above (below) $1/2$ we require fines below (above) $(\delta-1)/(2(1+\delta))$, which proves parts 1.) and 3.) of the Theorem.  
 From Propositions \ref{propeqm} and \ref{propeqm2}, we know that $\bar{p}_l>1/2$ requires $\eta>\hat{\eta}$ and $\hat{\mu}>1/2$, which implies:
\begin{align*}
     \eta&>1-\frac{(\delta-1)(1-r)}{(1+\delta)2fp} \\
     \frac{1}{2}&>\frac{r\left( \left(\frac{\delta}{1+\delta}\right)^{2}-\frac{1}{4}\right)-pf\eta}{r \left( \frac{1+\delta^{2}}{\left(1+\delta\right)^{2}}-\frac{1}{2}\right)}
\end{align*}
Combining the inequalities yields the necessary and sufficient condition:
\begin{equation*}
    pf<\frac{\delta-1}{2(1+\delta)}.
\end{equation*}    
Checking Propositions \ref{propeqm} and \ref{propeqm2} again, we see that $\bar{p}_h<1/2$ requires $\eta<\hat{\eta}$ and $\hat{\mu}<1/2$, which implies:
\begin{align*}
     \eta&<1-\frac{(\delta-1)(1-r)}{(1+\delta)2fp} \\
     \frac{1}{2}&<\frac{r\left( \left(\frac{\delta}{1+\delta}\right)^{2}-\frac{1}{4}\right)-pf\eta}{r \left( \frac{1+\delta^{2}}{\left(1+\delta\right)^{2}}-\frac{1}{2}\right)}
\end{align*}
Combining the inequalities yields the necessary and sufficient condition:
\begin{equation*}
    pf>\frac{\delta-1}{2(1+\delta)}.
\end{equation*}    
Together, the the two conditions prove 1.) and 3.) of the Theorem. It remains to establish that the cheating probability is always equal to $1/2$ if
\begin{equation*}
    pf=\frac{\delta-1}{2(1+\delta)}.
\end{equation*}    
This implies
\begin{align*}
     \hat{\eta}&=r \\
     \hat{\mu}&=\frac{r(1+3\delta)-2\eta(1+\delta)}{2r(\delta-1)}
\end{align*}
Combining the two equations implies the following relation:
\begin{align*}
\hat{\mu}& <\frac{1}{2} \ \text{if \ensuremath{\eta>\hat{\eta}}},\\
\hat{\mu}&=\frac{1}{2} \ \text{if \ensuremath{\eta=\hat{\eta}}},\\
\hat{\mu}&>\frac{1}{2} \ \text{if \ensuremath{\eta<\hat{\eta}}}.
\end{align*}
From Propositions \ref{propeqm} and \ref{propeqm2} we see that the equilibria in all three cases yield $\bar{p}=1/2.$
\end{proof}

\section{Random-effects multinomial regressions on team cheating type for main treatments}
\begin{table}[h]
\centering
\begin{tabular}{c ccc ccc}
\hline
\hline
\multirow{2}{*}{} & \multicolumn{3}{c}{\begin{tabular}{@{}c@{}}Model A\\First Half\end{tabular}} & \multicolumn{3}{c}{\begin{tabular}{@{}c@{}}Model B\\Second Half\end{tabular}} \\
\cline{2-7}
 & \begin{tabular}{@{}c@{}c@{}c@{}}(1)\\Ma\_Cheat\\versus\\No\_Cheat \end{tabular} & \begin{tabular}{@{}c@{}c@{}c@{}}(2)\\Con\_Cheat\\versus\\No\_Cheat \end{tabular} & \begin{tabular}{@{}c@{}c@{}c@{}}(3)\\Both\_Cheat\\versus\\No\_Cheat \end{tabular} & \begin{tabular}{@{}c@{}c@{}c@{}}(1)\\Ma\_Cheat\\versus\\No\_Cheat \end{tabular} & \begin{tabular}{@{}c@{}c@{}c@{}}(2)\\Con\_Cheat\\versus\\No\_Cheat \end{tabular} & \begin{tabular}{@{}c@{}c@{}c@{}}(3)\\Both\_Cheat\\versus\\No\_Cheat \end{tabular} \\
\hline
{\begin{tabular}{@{}c@{}}Treatment\\(Base:\textit{Jo\_H})\end{tabular}} & & & & & & \\ 

\textit{Jo\_L} & {\begin{tabular}{@{}c@{}}0.1930\\(0.5746)\end{tabular}} & {\begin{tabular}{@{}c@{}}0.3997\\(0.5914)\end{tabular}} & {\begin{tabular}{@{}c@{}}0.5983\\(0.6836)\end{tabular}} &  {\begin{tabular}{@{}c@{}}0.8726\\(0.6494)\end{tabular}} & {\begin{tabular}{@{}c@{}}1.9571**\\(0.8385)\end{tabular}} & {\begin{tabular}{@{}c@{}}1.5352\\(1.0074)\end{tabular}} \\

\textit{Ind\_H} & {\begin{tabular}{@{}c@{}}1.3678**\\(0.6352)\end{tabular}} & {\begin{tabular}{@{}c@{}}-0.9744*\\(0.5678)\end{tabular}} & {\begin{tabular}{@{}c@{}}0.5458\\(0.6836)\end{tabular}} & {\begin{tabular}{@{}c@{}}2.3155***\\(0.7346)\end{tabular}} & {\begin{tabular}{@{}c@{}}-0.6302\\(0.8874)\end{tabular}} & {\begin{tabular}{@{}c@{}}1.3672\\(1.0161)\end{tabular}} \\

\textit{Ind\_L} & {\begin{tabular}{@{}c@{}}1.3238*\\(0.7343)\end{tabular}} & {\begin{tabular}{@{}c@{}}-1.1839*\\(0.6807)\end{tabular}} & {\begin{tabular}{@{}c@{}}0.7382\\(0.7841)\end{tabular}} & {\begin{tabular}{@{}c@{}}2.4670***\\(0.8286)\end{tabular}} & {\begin{tabular}{@{}c@{}}-0.4634\\(1.0940)\end{tabular}} & {\begin{tabular}{@{}c@{}}1.6547\\(1.0225)\end{tabular}} \\

Constant & {\begin{tabular}{@{}c@{}}-0.0698\\(0.5323)\end{tabular}}  & {\begin{tabular}{@{}c@{}}-0.0840\\(0.4543)\end{tabular}} & {\begin{tabular}{@{}c@{}}0.4728\\(0.5607)\end{tabular}} & {\begin{tabular}{@{}c@{}}-0.7652\\(0.5015)\end{tabular}} & {\begin{tabular}{@{}c@{}}-2.0001***\\(0.7013)\end{tabular}} & {\begin{tabular}{@{}c@{}}-0.6008\\(0.7510)\end{tabular}} \\

\textit{Var(u2)} & 1.7875 & 1.7875 & 1.7875 & 3.4111 & 3.4111 & 3.4111\\
\textit{Var(u3)} & 1.9041 & 1.9041 & 1.9041 & 6.0574 & 6.0574 & 6.0574\\
\textit{Var(u4)} & 2.1986 & 2.1986 & 2.1986 & 9.5636 & 9.5636 & 9.5636 \\
\textit{N} & 1240 & 1240 & 1240 & 1240 & 1240 & 1240 \\
\hline
\multicolumn{7}{l}{{\small{}Note: Robust standard errors in parentheses, clustering on the group level}}\\
\multicolumn{7}{l}{{\small{}{*}{*}{*} Sig. at the 1 percent level, {*}{*}Sig. at the 5 percent level, {*}Sig. at the 10 percent level}}\\
\end{tabular}
\end{table}

\section{Random-effects multinomial regressions on team cheating type for additional treatments}
\begin{table}[H]
\centering
\begin{tabular}{c ccc ccc}
\hline
\hline
\multirow{2}{*}{} & \multicolumn{3}{c}{\begin{tabular}{@{}c@{}}Model C\\(First Half)\end{tabular}} & \multicolumn{3}{c}{\begin{tabular}{@{}c@{}}Model D\\(Second Half)\end{tabular}} \\
\cline{2-7}
 & \begin{tabular}{@{}c@{}c@{}c@{}}(1)\\Ma\_Cheat\\versus\\No\_Cheat\end{tabular} & \begin{tabular}{@{}c@{}c@{}c@{}}(2)\\Con\_Cheat\\versus\\No\_Cheat\end{tabular} & \begin{tabular}{@{}c@{}c@{}c@{}}(3)\\Both\_Cheat\\versus\\No\_Cheat\end{tabular} & \begin{tabular}{@{}c@{}c@{}c@{}}(1)\\Ma\_Cheat\\versus\\No\_Cheat\end{tabular} & \begin{tabular}{@{}c@{}c@{}c@{}}(2)\\Con\_Cheat\\versus\\No\_Cheat\end{tabular} & \begin{tabular}{@{}c@{}c@{}c@{}}(3)\\Both\_Cheat\\versus\\No\_Cheat\end{tabular} \\
\hline
{\begin{tabular}{@{}c@{}}Treatment\\(Base:\textit{Jo\_40})\end{tabular}} & & & & & & \\ 

\textit{Ind\_40} & 
\begin{tabular}{@{}c@{}}0.4539\\(0.5994)\end{tabular} & 
\begin{tabular}{@{}c@{}}-0.7614\\(0.6465)\end{tabular} & 
\begin{tabular}{@{}c@{}}-0.8245\\(0.6942)\end{tabular} & 
\begin{tabular}{@{}c@{}}1.5892**\\(0.7604)\end{tabular} &
\begin{tabular}{@{}c@{}}-0.7631\\(0.9396)\end{tabular} & 
\begin{tabular}{@{}c@{}}-0.0955\\(0.9757)\end{tabular} \\

Constant & 
\begin{tabular}{@{}c@{}}0.9877**\\(0.3941)\end{tabular} &
\begin{tabular}{@{}c@{}}0.2469\\(0.4378)\end{tabular} &
\begin{tabular}{@{}c@{}}1.3652***\\(0.4605)\end{tabular} &
\begin{tabular}{@{}c@{}}0.9281**\\(0.4515)\end{tabular} &
\begin{tabular}{@{}c@{}}-1.0495\\(0.6427)\end{tabular} &
\begin{tabular}{@{}c@{}}0.9709\\(0.6950)\end{tabular} \\

\textit{Var(u2)} & 3.3295 & 3.3295 & 3.3295 & 5.0463 & 5.0463 & 5.0463 \\
\textit{Var(u3)} & 1.6177 & 1.6177 & 1.6177 & 4.3836 & 4.3836 & 4.3836 \\
\textit{Var(u4)} & 2.5428 & 2.5428 & 2.5428 & 8.9900 & 8.9900 & 8.9900 \\

\textit{N} & 680 & 680 & 680 & 680 & 680 & 680 \\
\hline
\multicolumn{7}{l}{{\small{}Note: Robust standard errors in parentheses, clustering on the group level}}\\
\multicolumn{7}{l}{{\small{}{*}{*}{*} Sig. at the 1 percent level, {*}{*} Sig. at the 5 percent level, {*} Sig. at the 10 percent level}}\\
\end{tabular}
\end{table}

\section{Observed effort versus best response effort for additional treatments}

\begin{figure}[H]
\centering
\includegraphics[scale=0.23]{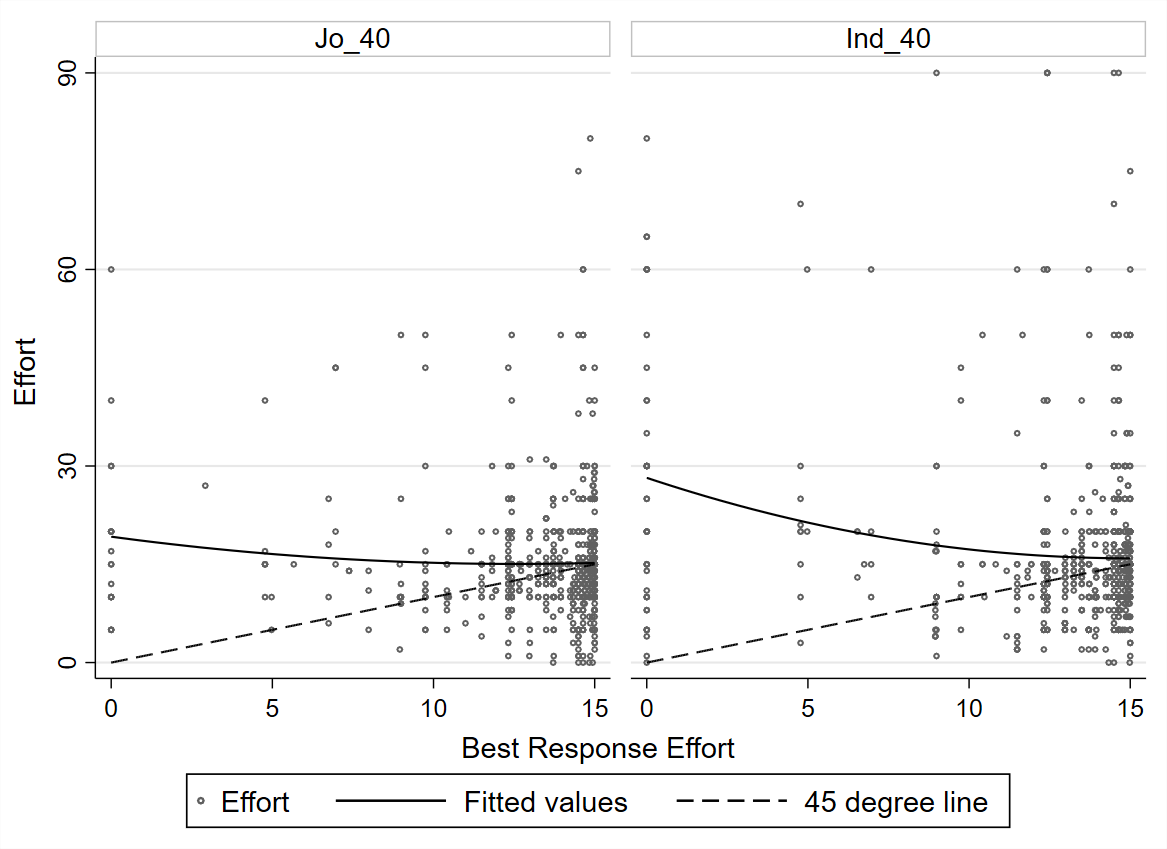}
\caption{Observed Effort versus Best Response Effort by Treatment \label{fig:Effort-versus-best}}
\end{figure}

\section{Random-effects regression on over-dissipation for additional treatments}
\begin{table}[H]
\centering
\begin{tabular}{ m{4.5cm} >{\centering\arraybackslash}m{2.5cm} >{\centering\arraybackslash}m{2.5cm} >{\centering\arraybackslash}m{2.5cm} }
\hline
\hline
& \begin{tabular}{@{}c@{}}Full\\Model\end{tabular} & \begin{tabular}{@{}c@{}}First-Half\\Model\end{tabular} & \begin{tabular}{@{}c@{}}Second-Half\\Model\end{tabular}\\
\hline
Treatment (Base:\textit{Jo\_40}) & & & \\ 
\textit{Ind\_40} & \begin{tabular}{@{}c@{}}1.7032\\(2.0346)\end{tabular} & \begin{tabular}{@{}c@{}}3.7571\\(2.8654)\end{tabular} & \begin{tabular}{@{}c@{}}-0.0898\\(1.6135)\end{tabular} \\[6pt]

Number of Cheating Teams (Base: 0) & & & \\ 
\textit{One Team Cheats} & \begin{tabular}{@{}c@{}}2.4104***\\(0.8024)\end{tabular} & \begin{tabular}{@{}c@{}}2.7051**\\(1.1251)\end{tabular} & \begin{tabular}{@{}c@{}}1.5029\\(1.0192)\end{tabular} \\[6pt]

\textit{Both Teams Cheat} & \begin{tabular}{@{}c@{}}1.5346\\(1.0282)\end{tabular} & \begin{tabular}{@{}c@{}}3.7484***\\(1.4299)\end{tabular} & \begin{tabular}{@{}c@{}}-0.6864\\(1.1628)\end{tabular} \\[6pt]

\multicolumn{1}{l}{Quarter (Base: Q1; Q3)} & & & \\
\textit{Q2} & \begin{tabular}{@{}c@{}}-4.3740***\\(1.3324)\end{tabular} & \begin{tabular}{@{}c@{}}-4.3306***\\(1.3371)\end{tabular} & \\[3pt]
\textit{Q3} & \begin{tabular}{@{}c@{}}-4.7501***\\(1.5072)\end{tabular} & & \\[3pt]
\textit{Q4} & \begin{tabular}{@{}c@{}}-5.7913***\\(1.6825)\end{tabular} & & \begin{tabular}{@{}c@{}}-0.8838\\(0.6336)\end{tabular} \\[6pt]

Intercept & \begin{tabular}{@{}c@{}}3.0650**\\(1.4621)\end{tabular} & \begin{tabular}{@{}c@{}}1.0990\\(1.5829)\end{tabular} & \begin{tabular}{@{}c@{}}0.4616\\(1.1981)\end{tabular} \\
\hline
\multicolumn{4}{l}{{\small{}{*}{*}{*} Sig. at the 1 percent level, {*}{*}Sig. at the 5 percent level, {*}Sig. at the 10 percent level}}\\
\end{tabular}
\caption{Random-effects Regression on Deviation in Contribution Effort, Clustering on the Group Level}
\label{tab:random-effect-devconteff}
\end{table}

\newpage 

\bibliographystyle{chicago}
\bibliography{masterbibfile}

\end{document}